\newcommand{\be}{\begin{equation}}
\newcommand{\ee}{\end{equation}}
\newcommand{\bea}{\begin{eqnarray}}
\newcommand{\eea}{\end{eqnarray}}
\newcommand{\bes}{\begin{equation*}}
\newcommand{\ees}{\end{equation*}}
\newcommand{\beas}{\begin{eqnarray*}}
\newcommand{\eeas}{\end{eqnarray*}}
\newtheorem*{rep@theorem}{\rep@title}
\newcommand{\newreptheorem}[2]{%
\newenvironment{rep#1}[1]{%
 \def\rep@title{#2 \ref{##1} (restated)}%
 \begin{rep@theorem}}%
 {\end{rep@theorem}}}
\newtheorem{thm}{Theorem}
\newtheorem*{thm*}{Theorem}
\newtheorem{cor}[thm]{Corollary}
\newtheorem*{lem*}{Lemma}
\newtheorem{prop}[thm]{Proposition}
\newtheorem{defn}[thm]{Definition}
\newtheorem{fact}[thm]{Fact}
\theoremstyle{remark}
\newtheorem*{rem}{Remark}
\begin{document}

\title{Quantum Random Access Codes for Boolean Functions}

\author{Jo\~{a}o F. Doriguello}
\affiliation{Quantum Engineering Centre for Doctoral Training, University of Bristol, United Kingdom}
\affiliation{School of Mathematics, University of Bristol, United Kingdom}
\email{joaof.doriguello@gmail.com}
\homepage{http://joaodoriguello.com}

\author{Ashley Montanaro}
\affiliation{School of Mathematics, University of Bristol, United Kingdom}
\affiliation{Phasecraft Ltd.}

	\maketitle

	\begin{abstract}
	    An $n\overset{p}{\mapsto}m$ random access code (RAC) is an encoding of $n$ bits into $m$ bits such that any initial bit can be recovered with probability at least $p$, while in a quantum RAC (QRAC), the $n$ bits are encoded into $m$ qubits. Since its proposal, the idea of RACs was generalized in many different ways, e.g.\ allowing the use of shared entanglement (called entanglement-assisted random access code, or simply EARAC) or recovering multiple bits instead of one. In this paper we generalize the idea of RACs to recovering the value of a given Boolean function $f$ on any subset of fixed size of the initial bits, which we call $f$-random access codes. We study and give protocols for $f$-random access codes with classical ($f$-RAC) and quantum ($f$-QRAC) encoding, together with many different resources, e.g.\ private or shared randomness, shared entanglement ($f$-EARAC) and Popescu-Rohrlich boxes ($f$-PRRAC). The success probability of our protocols is characterized by the \emph{noise stability} of the Boolean function $f$. Moreover, we give an \emph{upper bound} on the success probability of any $f$-QRAC with shared randomness that matches its success probability up to a multiplicative constant (and $f$-RACs by extension), meaning that quantum protocols can only achieve a limited advantage over their classical counterparts.
	\end{abstract}

\section{Introduction}

One of the possible origins of quantum computers' power is the exponential size of the Hilbert space: a $n$-qubit quantum state is a unit vector in a $2^n$ dimensional complex vector space. On the other hand, one of the fundamental results in quantum information theory -- Holevo's theorem~\cite{holevo1973bounds} -- states that no more than $n$ bits of classical information can be transmitted by $n$ qubits without entanglement. Nonetheless, interesting scenarios arise when allowing a small chance of transmitting the wrong message or/and obtaining \emph{partial} information at the expense of losing information about the rest of the system. One of these scenarios is the concept of \emph{quantum random access codes} (QRACs), where a number of bits are encoded into a smaller number of qubits such that any one of the initial bits can be recovered with some probability of success. A QRAC is normally denoted by $n\overset{p}{\mapsto} m$, meaning that $n$ bits are encoded into $m$ qubits such that any initial bit can be recovered with probability at least $p > 1/2$ (greater than $1/2$ since $p = 1/2$ can be achieved by pure guessing), and a classical version, called simply \emph{random access code} (RAC), is similarly defined, with the encoding message being $m$ bits. The idea of QRACs first appeared in a paper by Stephen Wiesner~\cite{wiesner1983conjugate} in 1983 under the name of \emph{conjugate coding}, and was later rediscovered by Ambainis \emph{et al.}\ in 1999~\cite{ambainis1999dense}.

Quantum random access codes found application in many different contexts, e.g.\ quantum finite automata~\cite{ambainis1999dense,nayak1999optimal}, network coding~\cite{hayashi20064,hayashi2007quantum}, quantum communication complexity~\cite{buhrman2001communication,MR2506518,MR2115303}, locally decodable codes~\cite{ben2008hypercontractive,kerenidis2004quantum,MR2087942,MR2184730}, non-local games~\cite{muhammad2014quantum,tavakoli2016spatial}, cryptography~\cite{pawlowski2011semi}, quantum state learning~\cite{MR2386653}, device-independent dimension witnessing~\cite{aguilar2018certifying,ahrens2014experimental,wehner2008lower}, self-testing measurements~\cite{farkas2020self,farkas2019self}, randomness expansion~\cite{li2011semi}, studies of no-signaling resources~\cite{grudka2014popescu}, and characterization of quantum correlations from information theory~\cite{pawlowski2009information}. The $2\mapsto 1$ and $3\mapsto 1$ QRACs were first experimentally demonstrated in~\cite{spekkens2009preparation}. See~\cite{foletto2020experimental,hameedi2017complementarity,muhammad2014quantum,tavakoli2015quantum,wang2019experimental} for subsequent demonstrations.

In this paper we further generalize the idea of (quantum) random access codes to recovering not just an initial bit, but the value of a fixed Boolean function on any subset of the initial bits with fixed size. We call them $f$-random access codes. The case of the Parity function was already considered in~\cite{ben2008hypercontractive}, and here we generalize to arbitrary Boolean functions $f:\{-1,1\}^k\to\{-1,1\}$.

\subsection{Related Work}

An $n\overset{p}{\mapsto}m$ (Q)RAC is an encoding of $n$ bits into $m$ (qu)bits such that any initial bit can be recovered with probability at least $p$. This probability is the \emph{worst case success probability} over all possible pairs $(x,i)$ of input string $x\in\{-1,1\}^n$ and recoverable bit $i\in\{1,\dots,n\}$. Many different resources can be used during the encoding and decoding, e.g.\ private randomness~(PR), shared randomness~(SR), shared entanglement, and even super-quantum correlations like Popescu-Rohrlich boxes~\cite{popescu1994quantum}. 

Regarding the classical RAC, Ambainis \emph{et al.}~\cite{ambainis1999dense} proved that there is no $2\overset{p}{\mapsto}1$ RAC (and $2^m\overset{p}{\mapsto}m$ RAC by extension) with PR and worst case success probability $p>1/2$. On the other hand, Ambainis \emph{et al.}~\cite{ambainis2008quantum} showed that RACs with SR can achieve success probability $p>1/2$.
\begin{thm}[{\cite[Equation~(25)]{ambainis2008quantum}}]
    \label{thr:thr2.1}
    The optimal $n\overset{p}{\mapsto}1$ RAC with SR has success probability
    \begin{align*}
        p = \frac{1}{2} + \frac{1}{2^n}\binom{n-1}{\lfloor \frac{n-1}{2}\rfloor} = \frac{1}{2} + \frac{1}{\sqrt{2\pi n}} - O\left(\frac{1}{n^{3/2}}\right).
    \end{align*}
\end{thm}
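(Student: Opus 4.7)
The plan is to reduce the shared-randomness worst-case optimization to a deterministic average-case optimization via a symmetrization argument, and then solve the latter using Fourier analysis on the Boolean cube.

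First, I would exploit the fact that the group $S_n \ltimes \{-1,1\}^n$ acts on input-index pairs by $(x,i) \mapsto (c \cdot \sigma(x), \sigma(i))$, transitively on $\{-1,1\}^n \times \{1,\dots,n\}$. This lets Alice and Bob symmetrize any deterministic protocol: given an encoder $E:\{-1,1\}^n\to\{-1,1\}$ and decoder signs $s\in\{-1,1\}^n$, they sample $(\sigma,c)\in S_n\times\{-1,1\}^n$ uniformly from the shared randomness, Alice transmits $m=E(c\cdot\sigma(x))$, and Bob outputs $c_{\sigma(i)}\,s_{\sigma(i)}\,m$. A direct check shows that for every fixed $(x,i)$ the induced pair $(c\cdot\sigma(x),\sigma(i))$ is uniform on $\{-1,1\}^n\times\{1,\dots,n\}$, so the success probability is the same on every input and equals the uniform average success of the original protocol. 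Conversely, the worst-case success of any SR protocol is at most its average over uniform $(x,i)$, which is at most the best deterministic average. Hence the optimum worst-case SR success equals the optimum average-case deterministic success.

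Second, I would optimize over deterministic protocols by Fourier analysis. Bob's decoder $D_i:\{-1,1\}\to\{-1,1\}$ may be taken without loss of generality to have the form $D_i(m)=s_i m$ (the constant decoders yield only $1/2$ on that index and are absorbed). The average success probability then rewrites as
\[
\frac{1}{2}+\frac{1}{2n}\sum_{i=1}^n s_i\hat{E}(\{i\}) \;=\; \frac{1}{2}+\frac{1}{2n}\,\E_x\!\left[E(x)\sum_{i=1}^n s_i x_i\right],
\]
which, for fixed $s$, is maximized by $E(x)=\sgn\!\left(\sum_i s_i x_i\right)$, giving $\tfrac{1}{2}+\tfrac{1}{2n}\E_x\!\left|\sum_i s_i x_i\right|$. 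Substituting $y_i=s_i x_i$ (still uniform) eliminates the dependence on $s$, so the optimum is $\tfrac{1}{2}+\tfrac{1}{2n}\E_y\!\left|\sum_i y_i\right|$, realized (up to the choice of $s$) by the majority function.

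Third, I would evaluate the resulting moment $\E_y\!\left|\sum_i y_i\right| = 2^{-n}\sum_k \binom{n}{k}\lvert 2k-n\rvert$ via the standard binomial identity $\sum_k \binom{n}{k}\lvert 2k-n\rvert = 2n\binom{n-1}{\lfloor(n-1)/2\rfloor}$, which follows from splitting the sum at $n/2$ and applying the absorption identity $k\binom{n}{k}=n\binom{n-1}{k-1}$ to telescope. Combining gives the closed form $p=\tfrac{1}{2}+2^{-n}\binom{n-1}{\lfloor(n-1)/2\rfloor}$, and the asymptotic $\tfrac{1}{2}+\tfrac{1}{\sqrt{2\pi n}}-O(n^{-3/2})$ follows from Stirling's approximation to the central binomial coefficient. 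The main obstacle is getting the symmetrization in step one to genuinely collapse worst case to uniform average; once that is in hand the rest is routine Fourier analysis plus the binomial identity.
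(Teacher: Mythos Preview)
The paper does not prove this statement; it is quoted from Ambainis \emph{et al.}~\cite{ambainis2008quantum} without argument. Your proof is correct and is essentially the argument of the original reference: symmetrize under $S_n\ltimes\{-1,1\}^n$ to collapse the worst-case SR optimum to the uniform-average deterministic optimum, then observe that the optimal encoder is $E(x)=\sgn\big(\sum_i s_i x_i\big)$ (majority after a sign change), and finally evaluate $\E\big|\sum_i y_i\big|$ via the central-binomial identity. One small point worth spelling out when you write it up: the claim that constant decoders ``are absorbed'' is fine because on any index where Bob outputs a constant the contribution to the average is exactly $1/2$, so replacing that decoder by $D_i(m)=m$ can only help; this justifies restricting to $D_i(m)=s_i m$ in the optimization.
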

For a general number of encoded bits, Ambainis \emph{et al.}~\cite{ambainis1999dense} developed a RAC with PR using a specific code from~\cite{cohen1983nonconstructive} which matches their classical lower bound $m \geq (1-H(p))n$ up to an additive logarithmic term, where $H(p) = -p\log_2{p} - (1-p)\log_2(1-p)$ is the binary entropy function.
\begin{thm}[{\cite[Theorem 2.2]{ambainis1999dense}}]
    \label{thr:thr2.1a}
    There is an $n\overset{p}{\mapsto} m$ RAC with PR and $m \leq (1-H(p))n + 7\log_2{n}$ for any $p > \frac{1}{2}$.
\end{thm}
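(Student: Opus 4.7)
The plan is to construct the RAC directly from a Cohen-style probabilistic covering code, using private randomness only to select among the codewords near $x$.

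Fix a target success probability $p' = p + \delta$ slightly above $p$, with $\delta$ to be tuned, and set $r = \lfloor (1-p')n\rfloor$. Let $C$ be a multiset of $M = n^c \cdot 2^{(1-H(p'))n}$ i.i.d.\ uniformly random strings in $\{0,1\}^n$, for a constant $c$ to be chosen. For each $x \in \{0,1\}^n$ write $B_x := \{c \in C : d(c,x) \leq r\}$. I would show by the probabilistic method that with positive probability $C$ satisfies, for every $x \in \{0,1\}^n$ and every $i \in \{1,\dots,n\}$,
\[
B_x \neq \emptyset \quad\text{and}\quad \frac{|\{c \in B_x : c_i = x_i\}|}{|B_x|} \;\geq\; p' - \delta.
\]
Given such a $C$, fixed once and for all as part of the deterministic protocol, the encoder uses private randomness to sample $c$ uniformly from $B_x$ and transmits its $\lceil \log_2 M\rceil$-bit index; the decoder simply outputs $c_i$. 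The displayed inequality then gives worst-case success probability $\geq p' - \delta = p$.

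Both required properties reduce to Chernoff bounds. For a uniformly random string $c \in \{0,1\}^n$, one has $\Pr[c \in B_x] = V(n,r)/2^n$, where $V(n,r)$ is the Hamming-ball volume, and by the permutation symmetry of the ball, $\Pr[c_i = x_i \mid c \in B_x] = V(n-1,r)/V(n,r)$. The identity $V(n,r) = V(n-1,r) + V(n-1,r-1)$ combined with $\binom{n-1}{r}/\binom{n-1}{r-1} = (n-r)/r \to p'/(1-p')$ for $r/n \to 1-p'$ yields $V(n-1,r)/V(n,r) = p' + O(1/n)$, so the conditional agreement probability at any coordinate is at least $p' - O(1/n)$. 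Choosing $c$ so that $\mathbb{E}[|B_x|] = M\cdot V(n,r)/2^n \geq n^c \gg n/\delta^2$, a multiplicative Chernoff bound applied separately to $|B_x|$ and to $|\{c \in B_x : c_i = x_i\}|$ gives deviation from the mean by a factor more than $(1\pm\delta)$ with probability $\leq \exp(-\Omega(\delta^2 n^c))$; a union bound over the $\leq n\cdot 2^n$ pairs $(x,i)$ then produces a positive fraction of codes $C$ satisfying all the required properties simultaneously.

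Finally, I would balance the parameters. Taking $\delta = 1/n$, the union bound requires $\exp(-\Omega(n^{c-2})) \leq 2^{-n}/n$, so a modest constant $c$ suffices, giving $\log_2 M = (1-H(p'))n + O(\log n)$. A first-order expansion of $H$ at $p$ gives $(1-H(p'))n \leq (1-H(p))n + |H'(p)|\,\delta n + O(1) \leq (1-H(p))n + O(1)$ whenever $p$ is bounded away from $1$, so altogether $m \leq (1-H(p))n + 7\log_2 n$ after tracking the hidden constants (for $p$ close to $1$ the bound is trivial via the identity encoding). The main obstacle is the bookkeeping in the union bound: the $2^n$-sized input space forces Chernoff to deliver exponential-in-$n$ confidence, which in turn forces $|B_x|$ to carry a polynomial overhead that must be squeezed into the claimed $7\log_2 n$ term. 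The structural fact that makes the scheme work -- and dispenses with any need for extra randomisation such as random XOR masks or coordinate permutations -- is the fortunate appearance of $V(n-1,r)/V(n,r) \to p'$ as the per-coordinate agreement probability.
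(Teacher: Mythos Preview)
The paper does not prove this statement --- it is quoted from Ambainis \emph{et al.} --- but the original argument is visible inside the proof of Theorem~\ref{thr:thr4.0}: fix a single Cohen covering code $C$ of radius $(1-p-\tfrac1n)n$ and size $\log_2|C|\le(1-H(p))n+O(\log n)$; encode $x$ deterministically as its nearest codeword $C(x)$, which recovers a \emph{random} coordinate with probability $\ge p+\tfrac1n$; then symmetrise the input by a uniformly random permutation $\pi$ and XOR mask $r$ (shared randomness) to make this guarantee worst-case over $i$; finally invoke Newman's theorem (Theorem~\ref{thr:thr4.00}) to shrink the shared randomness to $O(\log n)$ private bits appended to the message, producing the $7\log_2 n$ overhead. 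Your route is correct and genuinely different: instead of one codeword per input plus input randomisation, you let private randomness select among the \emph{many} nearby codewords of a random code, and argue via Chernoff plus a union bound over all $2^n\cdot n$ pairs $(x,i)$ that every per-coordinate agreement fraction inside $B_x$ is already near its ideal value $V(n-1,r)/V(n,r)$. (That value is in fact $\ge p'$ with no $O(1/n)$ loss: by coordinate symmetry it equals $1-\tfrac1n\,\mathbb{E}[d(c,x)]$ for $c$ uniform in the full Hamming ball of radius $r$, and $d(c,x)\le r\le(1-p')n$.) Your approach buys directness --- no $\pi$, no XOR mask, no Newman --- by baking the worst-case guarantee into the random code itself; the Ambainis route buys modularity, since any covering code of the right size plugs in and the $2^n$-sized union bound is hidden inside Newman's theorem rather than carried out explicitly. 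Both reach $(1-H(p))n+O(\log n)$; whether your constants actually fit under $7$ is, as you say, bookkeeping.
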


As for QRACs, Ambainis \emph{et al.}~\cite{ambainis1999dense} showed the existence of a $2\overset{p}{\mapsto}1$ QRAC with PR\footnote{Usually private randomness is already assumed in QRACs under the encoding onto density matrices.} and $p = \frac{1}{2} + \frac{1}{2\sqrt{2}}\approx 0.85$, and the existence of a $3\overset{p}{\mapsto}1$ QRAC with PR and $p = \frac{1}{2} + \frac{1}{2\sqrt{3}} \approx 0.79$ (the second attributed to Chuang). Later Hayashi \emph{et al.}~\cite{hayashi20064} showed the impossibility of a $4\overset{p}{\mapsto} 1$ QRAC (and $4^m\overset{p}{\mapsto} m$ QRAC by extension) with PR and success probability $p>1/2$. Similarly to the classical case, Ambainis \emph{et al.}~\cite{ambainis2008quantum} showed that QRACs can also benefit from SR.
\begin{thm}[{\cite[Theorem 6]{ambainis2008quantum}}]
    \label{thr:thr2.2}
    There is an $n\overset{p}{\mapsto}1$ QRAC with SR and \footnote{Ambainis \emph{et al.}~\cite{ambainis2008quantum} do not give the high order terms, but these can be calculated by following their procedure together with~\cite[Equation 2.198]{hughes1995random}.}
    \begin{align*}
        p = \frac{1}{2} + \sqrt{\frac{2}{3\pi n}} + O\left(\frac{1}{n^{3/2}}\right).
    \end{align*}
\end{thm}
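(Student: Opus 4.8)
The plan is to give the standard single-qubit construction and to use the shared randomness to randomise the measurement directions. Alice and Bob use their shared randomness to sample $n$ i.i.d.\ uniformly random unit vectors $\vec a_1,\dots,\vec a_n\in S^2$ (the Bloch sphere). On input $x\in\{-1,1\}^n$ Alice forms $\vec S_x=\sum_{j=1}^n x_j\vec a_j$ and sends the qubit $\rho_x=\frac12(I+\hat v_x\cdot\vec\sigma)$, where $\hat v_x=\vec S_x/\|\vec S_x\|$ (and she sends anything in the measure-zero event $\vec S_x=\vec 0$); to recover bit $i$, Bob performs the projective measurement $\{\tfrac12(I\pm\vec a_i\cdot\vec\sigma)\}$ and outputs the corresponding sign.

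\textbf{Reducing to a random-flight expectation.} For fixed $\vec a_1,\dots,\vec a_n$ Bob outputs $x_i$ with probability $\frac12(1+x_i\,\hat v_x\cdot\vec a_i)$, so the success probability on $(x,i)$ is $\frac12+\frac12\E[x_i\,\hat v_x\cdot\vec a_i]$. Replacing $\vec a_j\mapsto x_j\vec a_j$ leaves the joint law of $(\vec a_1,\dots,\vec a_n)$ invariant (the uniform measure on $S^2$ is symmetric under reflection), and permutation symmetry handles $i$; hence this expectation is independent of $(x,i)$, so the worst-case success probability equals its value at $x=(1,\dots,1)$ and equals the average over $i$. With $\vec S=\sum_{j=1}^n\vec a_j$ and the identity $\vec a_i\cdot\vec S=1+\sum_{j\ne i}\vec a_i\cdot\vec a_j$,
\begin{align*}
  p=\frac12+\frac{1}{2n}\sum_{i=1}^n\E\!\left[\frac{\vec a_i\cdot\vec S}{\|\vec S\|}\right]
   =\frac12+\frac{1}{2n}\,\E\!\left[\frac{\vec S\cdot\vec S}{\|\vec S\|}\right]
   =\frac12+\frac{1}{2n}\,\E\bigl[\|\vec S\|\bigr].
\end{align*}
Thus it remains to estimate $\E[\|\vec S\|]$, the mean end-to-end distance of an $n$-step Rayleigh random flight in $\R^3$ with unit steps.

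\textbf{Asymptotics of $\E[\|\vec S\|]$.} Each coordinate of $\vec a_j$ has mean $0$ and variance $1/3$ (the coordinates are exchangeable and $\|\vec a_j\|=1$), so by the central limit theorem $\vec S/\sqrt n$ converges in distribution to $G\sim N(0,\tfrac13 I_3)$; since $\E[\|\vec S\|^2]=n$ exactly, the family $\|\vec S\|/\sqrt n$ is bounded in $L^2$, hence uniformly integrable, and therefore $\E[\|\vec S\|]/\sqrt n\to\E[\|G\|]$. As $\sqrt 3\,\|G\|$ is $\chi_3$-distributed, $\E[\|G\|]=\tfrac1{\sqrt3}\cdot\tfrac{2\sqrt2}{\sqrt\pi}=2\sqrt{\tfrac{2}{3\pi}}$, which gives the leading term $p=\frac12+\sqrt{\tfrac{2}{3\pi n}}+o(n^{-1/2})$. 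To sharpen the error to $O(n^{-3/2})$ one follows the procedure of Ambainis \emph{et al.}~\cite{ambainis2008quantum} together with the asymptotic expansion for the mean of a $3$-dimensional random flight in \cite[Equation~2.198]{hughes1995random} (equivalently, an Edgeworth correction to the CLT, where the exact identity $\E[\|\vec S\|^2]=n$ forces the $O(n^{-1})$ correction term in $\E[\|\vec S\|]/\sqrt n$ to vanish). Substituting the resulting expansion of $\E[\|\vec S\|]$ into the displayed formula for $p$ completes the proof.

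\textbf{Main obstacle.} Everything up to the last step is elementary symmetry and linear algebra; the real work is the sharp asymptotics of $\E[\|\vec S\|]$. The leading order is a routine CLT-plus-uniform-integrability argument, but pinning down the first correction term — and verifying that it does not contribute at order $n^{-1/2}$ in $p$ — requires either invoking the classical random-flight literature or executing the Edgeworth expansion carefully, which is precisely where the reference in the footnote is used.
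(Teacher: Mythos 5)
The paper does not actually prove this statement --- it is quoted verbatim from Ambainis \emph{et al.}\ [Theorem~6], with the footnote deferring the $O(n^{-3/2})$ term to Hughes --- and your reconstruction (random measurement directions from shared randomness, the reflection/permutation symmetry reducing the worst case to $\frac12+\frac{1}{2n}\E[\|\vec S\|]$ for a unit-step random flight, CLT plus uniform integrability for the leading order, and the random-flight literature for the error term) is correct and is essentially the argument of the cited source. The only quibble is the parenthetical claim that $\E[\|\vec S\|^2]=n$ forces a correction term in $\E[\|\vec S\|]/\sqrt n$ to vanish: what actually matters is that the step distribution is symmetric, so the Edgeworth correction at relative order $n^{-1/2}$ vanishes, while the relative $O(n^{-1})$ correction is nonzero (it is $-3/(8n)$) but only affects $p$ at order $n^{-3/2}$, which is absorbed by the stated error term.
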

The specific case of $m=2$ encoding qubits was explored in~\cite{hayashi20064,imamichi2018constructions,liabotro2017improved}. For the general case of $m>1$, Iwama \emph{et al.}~\cite{iwama2007unbounded} constructed an $(4^m-1)\overset{p}{\mapsto}m$ QRAC with PR and $p = \frac{1}{2} + \frac{1}{2(2^m-1)\sqrt{2^m+1}}$ (such construction also works for all $n<4^m$). On the other hand, Ambainis \emph{et al.}~\cite{ambainis1999dense} proved that if an $n\overset{p}{\mapsto} m$ QRAC with PR and $p>1/2$ exists, then $m = \Omega(n/\log{n})$, which was later improved to $m\geq(1-H(p))n$ by Nayak~\cite{nayak1999optimal}, thus matching the same classical lower bound from~\cite{ambainis1999dense}.

The idea of decoding a function of the initial bits instead of a single bit was already considered by Ben-Aroya, Regev and de Wolf~\cite{ben2008hypercontractive} (who also considered recovering multiple bits rather than just one). More specifically, they defined an $n\overset{p}{\mapsto}m$ $\operatorname{XOR}_k$-QRAC, where $n$ bits are encoded into $m$ qubits such that the parity of any $k$ initial bits can be recovered with success probability at least~$p$.\footnote{In their definition the success probability is the average over random $k$-subsets and random inputs, which, in our context, is equivalent to using SR.} Using their hypercontractive inequality for matrix-valued functions, they proved the following upper bound on the success probability.
\begin{thm}[{\cite[Theorem 7]{ben2008hypercontractive}}]
    \label{thr:thr1.4}
    For any $\eta > 2\ln{2}$ there is a constant $C_\eta$ such that, for any $n\overset{p}{\mapsto}m$ $\operatorname{XOR}_k$-QRAC with SR and $k=o(n)$,
    \begin{align}
        p \leq \frac{1}{2} + C_\eta \left(\frac{\eta m}{n}\right)^{k/2}.\label{eq:eq1.4}
    \end{align}
\end{thm}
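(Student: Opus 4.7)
The plan is to follow the Fourier-analytic strategy developed in the same paper by Ben-Aroya, Regev and de Wolf: reduce the XOR$_k$-QRAC success probability to a bound on the level-$k$ matrix Fourier weight of the encoding, and then control that weight using their matrix-valued hypercontractive inequality. The constant $2\ln 2$ in that inequality is precisely what forces the restriction $\eta > 2\ln 2$ in the theorem.

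Fix any $n\overset{p}{\mapsto} m$ XOR$_k$-QRAC with SR: write $\rho_x \in M_{2^m}$ for Alice's encoding and $M_S := E_S^{+1} - E_S^{-1}$ (with $\|M_S\|_\infty \le 1$) for Bob's $\pm 1$-valued observable on receiving $S$. Shared randomness enters only through the initial averaging of the strategy, and, introducing the matrix-valued Fourier coefficients $\hat\rho(S) := \mathbb{E}_x[\chi_S(x)\rho_x]$, a direct calculation gives
\[
\bar p - \tfrac{1}{2} \;=\; \tfrac{1}{2}\,\mathbb{E}_{|S|=k}\,\tr\!\bigl(M_S\,\hat\rho(S)\bigr).
\]
Applying matrix trace duality $|\tr(AB)| \le \|A\|_\infty\|B\|_1$, Cauchy--Schwarz over $S$, and the Schatten comparison $\|B\|_1 \le \sqrt{d}\,\|B\|_2$ with $d = 2^m$ then bounds this by
\[
\bar p - \tfrac{1}{2} \;\le\; \tfrac{1}{2}\sqrt{\tfrac{2^m\,W_k(\rho)}{\binom{n}{k}}}\,,
\qquad W_k(\rho) := \sum_{|S|=k}\|\hat\rho(S)\|_{S_2}^{\,2}.
\]

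The remaining task is to bound $W_k(\rho)$, and this is where matrix hypercontractivity enters. Working with the normalized trace $\tau = \tr/d$, I would pair $W_k = \langle\rho^{=k},\rho\rangle$ by matrix Hölder, exploit $\|\rho_x\|_{L_{q'}(\tau)} \le d^{-1/q'}$ (a direct calculation for any density matrix and any $q' \ge 1$), and then apply the BRW matrix-valued hypercontractive inequality in the form $\|\rho^{=k}\|_{L_q(\tau)} \le (c(q-1))^{k/2}\|\rho^{=k}\|_{L_2(\tau)}$ with the matrix constant $c = 2\ln 2$. Tracking the $d$-factors yields
\[
W_k(\rho) \;\le\; 2^{m(2/q-1)}\bigl(c(q-1)\bigr)^{k}, \qquad \forall\,q \ge 2.
\]

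The final step is to choose the free parameter $q$ proportional to $m/k$ so that the dimensional factor $2^{m(2/q-1)}$ balances against the hypercontractive factor $(c(q-1))^k$. Substituting back and using the Stirling estimate $\binom{n}{k} \ge (n/k)^k$ (valid for $k = o(n)$ with lower-order corrections absorbed into $C_\eta$) reduces the whole expression to $C_\eta(\eta m/n)^{k/2}$ for any $\eta > 2\ln 2$. The main obstacle is this last optimization: the dimension factor $2^m$ introduced by the Schatten comparison $\|B\|_1 \le \sqrt{d}\|B\|_2$ must be absorbed entirely by the hypercontractive gain, and this is only possible when $\eta$ exceeds the matrix hypercontractive constant $c = 2\ln 2$. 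An improvement of that constant (whose scalar value is $c = 1$) would translate immediately into a smaller admissible $\eta$ in the theorem.
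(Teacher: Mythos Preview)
Your overall strategy --- Fourier-expand, bound the level-$k$ weight via hypercontractivity, then optimise a free parameter --- is the right one and matches both BRW's argument and this paper's generalisation (Theorem~\ref{thr:thr4.4} here, specialised to $f=\operatorname{XOR}_k$). The execution, however, diverges in one essential place: the paper never passes through the Schatten-$2$ norm. After $\bar p-\tfrac12 \le \tfrac12\,\mathbb{E}_{|S|=k}\|\hat\rho(S)\|_{\operatorname{tr}}$ and Cauchy--Schwarz over $S$, it stays with the trace norm and invokes BRW's Lemma~6 (Theorem~\ref{thm:thm2.2.c1} here) directly,
\[
\sum_{S}\delta^{|S|}\|\hat\rho(S)\|_{\operatorname{tr}}^2\le 2^{2\delta m},\qquad \delta\in[0,1],
\]
which, restricted to $|S|=k$ with the choice $\delta=k/((2\ln 2)m)$, already gives $\sum_{|S|=k}\|\hat\rho(S)\|_{\operatorname{tr}}^2\le\bigl((2e\ln 2)m/k\bigr)^k$; Stirling on $\binom{n}{k}$ then produces the threshold $\eta>2\ln 2$. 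No $\sqrt d$ loss is ever incurred, so nothing needs to be ``absorbed''.

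Your route instead loses $\sqrt d$ via $\|B\|_1\le\sqrt d\,\|B\|_2$ and tries to recover it through a $2\to q$ matrix hypercontractive bound $\|\rho^{=k}\|_{L_q(\tau)}\le\bigl(c(q-1)\bigr)^{k/2}\|\rho^{=k}\|_{L_2(\tau)}$ with $c=2\ln 2$. That is the gap: this is \emph{not} the inequality BRW prove. Their hypercontractivity is a $p\to 2$ statement (for $1\le p\le 2$) with Schatten-$p$ norms on the matrix side, from which Lemma~6 is derived; it does not hand you a $2\to q$ bound in $L_q(\tau)$ with the constant you assert, and you give no independent justification. Indeed, carrying your own optimisation through with $c=2\ln 2$ yields $\eta>(2\ln 2)^2$, strictly worse than the claimed $\eta>2\ln 2$; you would need $c=1$ (the scalar Bonami constant, not known for matrix-valued functions) to match. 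So either the detour through $S_2$ genuinely costs you a factor, or the inequality you invoke requires its own proof --- whereas the trace-norm route via Lemma~6 sidesteps the issue entirely.
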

They conjectured that the factor $\eta > 2\ln{2}$ can be dropped from the above bound, and thus extended to $m/n > 1/(2\ln{2}) \approx 0.72$, although it might require a strengthening of their hypercontractive inequality.

The use of shared entanglement in random access codes was first considered by Klauck~\cite{MR2115303,klauck2007one}. Here the encoding and decoding parties are allowed to use an arbitrary amount of shared entangled states (note that shared entanglement can be used to obtain both private and shared randomness). The figure of merit in this generalization is the relation between $n$, $m$ and $p$, while the amount of shared entanglement is not taken into account. Klauck~\cite{MR2115303,klauck2007one} considered an $n\overset{p}{\mapsto}m$ QRAC with shared entanglement and, by its equivalence to the quantum one-way communication complexity for the index function, proved the lower bound $m\geq (1-H(p))n/2$, similar to Nayak's bound. Later Paw\l{}owski and \.{Z}ukowski~\cite{pawlowski2010entanglement} coined the term entanglement-assisted random access code (EARAC), which is a RAC with shared entanglement, and studied the case when $m=1$, giving protocols with better decoding probabilities compared to the usual $n\overset{p}{\mapsto} 1$ QRAC with SR. Recently T{\u{a}}n{\u{a}}sescu \emph{et al.}~\cite{tuanuasescu2020optimal} expanded the idea of $n\overset{p}{\mapsto}1$ EARACs to recovering an initial bit under a specific request distribution.

\begin{thm}[\cite{pawlowski2010entanglement} and {\cite[Corollary 2 and Theorem 5]{tuanuasescu2020optimal}}]
    \label{thr:thr2.3}
    The optimal $n\overset{p}{\mapsto}1$ EARAC with SR has success probability
    \begin{align*}
        p = \frac{1}{2} + \frac{1}{2\sqrt{n}}.
    \end{align*}
\end{thm}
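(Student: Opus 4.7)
The plan is to establish the theorem through two complementary parts: a uniform upper bound on any EARAC with SR, and a matching explicit protocol.

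For the upper bound, I would first reduce any $n \overset{p}{\mapsto} 1$ EARAC to a canonical XOR-game form. Alice's transmitted bit is the outcome $A_x \in \{-1,1\}$ of a measurement on her half of the shared entangled state (depending on $x$), and Bob's guess can be taken to equal $A_x \cdot B_i$, where $B_i \in \{-1,1\}$ is an observable on his half depending only on $i$; Bob's apparent freedom to branch on the received bit can be absorbed into a single $\pm 1$-valued observable without loss. By Tsirelson's theorem, the resulting correlator factorizes as $\E[A_x B_i] = \vec{a}_x \cdot \vec{b}_i$ for unit vectors $\vec{a}_x, \vec{b}_i$ in some real Hilbert space, so the success probability for input $(x, i)$ equals $\tfrac{1}{2}(1 + x_i \vec{a}_x \cdot \vec{b}_i)$. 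Using SR to symmetrize over bit positions and sign flips of $x$, the worst-case probability coincides with the uniform average over $(x, i)$.

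The key estimate is then
\[
    \frac{1}{2^n n}\sum_{x, i} x_i \,\vec{a}_x \cdot \vec{b}_i = \frac{1}{2^n n}\sum_x \vec{a}_x \cdot \Big(\sum_i x_i \vec{b}_i\Big) \leq \frac{1}{n}\, \E_x \Big\|\sum_i x_i \vec{b}_i\Big\| \leq \frac{1}{n}\sqrt{\sum_i \|\vec{b}_i\|^2} = \frac{1}{\sqrt{n}},
\]
combining Cauchy-Schwarz (using $\|\vec{a}_x\| \leq 1$) with Jensen's inequality applied to the identity $\E_x \big\|\sum_i x_i \vec{b}_i\big\|^2 = \sum_i \|\vec{b}_i\|^2$, which in turn uses $\E[x_i x_j] = \delta_{ij}$. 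This gives $p \leq \tfrac{1}{2} + \tfrac{1}{2\sqrt{n}}$.

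For the matching protocol, the equality conditions in the above estimate dictate the choice: the $\vec{b}_i$ must be orthonormal and $\vec{a}_x$ must be parallel to $\sum_i x_i \vec{b}_i$. The natural realization is $\vec{b}_i = e_i \in \R^n$ and $\vec{a}_x = x/\sqrt{n}$, for which $\vec{a}_x \cdot \vec{b}_i = x_i/\sqrt{n}$ holds uniformly. I would then realize these unit vectors as $\pm 1$-valued observables acting on a shared maximally entangled state of local dimension $n$ through the standard Tsirelson construction, producing an explicit EARAC that attains success probability $\tfrac{1}{2} + \tfrac{1}{2\sqrt{n}}$ for every $(x, i)$.

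The main obstacle is the canonical-form reduction that underlies the upper bound: justifying that Bob's ability to choose a different measurement for each value of Alice's bit, together with arbitrary entangled ancillas, cannot exceed the single-observable Tsirelson bound. Once this reduction is in place, the remaining steps are a direct Cauchy-Schwarz/Khintchine calculation and an explicit vector construction.
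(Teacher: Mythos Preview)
The paper does not supply its own proof of this statement; it is quoted as a known result from the cited references. Your outline is precisely in the spirit of those works: achievability via an explicit Tsirelson-type vector construction, and optimality via the Tsirelson vector characterisation followed by a Cauchy--Schwarz/Khintchine estimate.

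On the obstacle you flag, the reduction to a single Bob observable goes through once you expand the bias directly. With Alice's $\pm1$ observable $A_x$ and Bob's two $\pm1$ observables $G_{i,+1},G_{i,-1}$ (one for each value of the received bit), the bias on input $(x,i)$ is
\[
x_i\,\big\langle\psi\big|\big(\tfrac{I+A_x}{2}\otimes G_{i,+1}+\tfrac{I-A_x}{2}\otimes G_{i,-1}\big)\big|\psi\big\rangle
= x_i\langle I\otimes C_i\rangle + x_i\langle A_x\otimes D_i\rangle,
\]
where $C_i=\tfrac12(G_{i,+1}+G_{i,-1})$ and $D_i=\tfrac12(G_{i,+1}-G_{i,-1})$. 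Averaging over uniform $x$ (which is what the SR symmetrisation buys you) annihilates the $C_i$ term, and since $\|D_i\|\le 1$, Tsirelson's theorem in its general form still yields vectors $\vec a_x,\vec d_i$ of norm at most $1$ with $\langle A_x\otimes D_i\rangle=\vec a_x\cdot\vec d_i$. Your Cauchy--Schwarz computation then applies verbatim with $\vec b_i$ replaced by $\vec d_i$. So Bob's ability to branch on Alice's bit is not a genuine obstruction, only a bookkeeping step, and your sketch is sound.
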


The idea of (Q)RAC was generalized in other ways, e.g.\ parity-oblivious~\cite{ambainis2019parity,chailloux2016optimal,spekkens2009preparation} and multiparty~\cite{saha2020multiparty} versions, encoding on $d$-valued qubits (qudits)~\cite{ambainis2015optimal,casaccino2008extrema,farkas2019self,liabotro2017improved,tavakoli2015quantum}, a wider range of information retrieval tasks~\cite{emeriau2020quantum} and a connection to Popescu-Rohrlich boxes. It was shown~\cite{wolf2005oblivious} that a Popescu-Rohrlich box can simulate a RAC by means of just one bit of communication, while in~\cite{grudka2014popescu} the converse was proven. An object called \emph{racbox}~\cite{grudka2014popescu} was defined, which is a box that implements a RAC when supported with one bit of communication, and it was shown that a \emph{non-signaling} racbox is equivalent to a Popescu-Rohrlich box. A quantum version of a racbox was later proposed in~\cite{grudka2015nonsignaling}. Finally, we mention that RACs were also studied within ``theories'' that violate the uncertainty relation for anti-commuting observables and present stronger-than-quantum correlations~\cite{steeg2008relaxed}.

\subsection{Our Results}

This paper focuses on generalizing the classical, quantum and entanglement-assisted random access codes. Instead of recovering a single bit from the initial string $x\in\{-1,1\}^n$, we are interested in evaluating a Boolean function $f:\{-1,1\}^k\to\{-1,1\}$ on any sequence of $k$ bits from $x$. We generically call them $f$-random access codes. Let $\mathcal{S}_n^k = \{(S_i)_{i=1}^k\in \{1,\dots,n\}^k ~|~ S_i \neq S_j ~\forall i,j\}$ be the set of sequences of different elements from $\{1,\dots,n\}$ with length $k$ and let $x_S\in\{-1,1\}^k$ denote the substring of $x\in\{-1,1\}^n$ specified by $S\in \mathcal{S}_n^k$. Alice gets $x\in\{-1,1\}^n$ and she needs to encode her data and send it to Bob, so that he can decode $f(x_S)$ for any $S\in\mathcal{S}_n^k$ with probability $p>1/2$. Such problem was already considered by Sherstov in a \emph{two-way communication complexity} setting~\cite{sherstov2009separating} and later used in his pattern matrix method~\cite{sherstov2011pattern} in order to prove other communication complexity lower bounds. Even though our results are expressed in a random access code language, they can also be seen as in a \emph{one-way communication complexity} setting. If two-way communication is allowed, Bob can send the identity of his sequence to Alice with $O(k \log n)$ bits of communication, whereas (as we will see) significantly more communication may be required in the one-way scenario.

In the following, $\Pi$ will refer to a sample space with some probability distribution. As before, PR and SR stand for private and shared randomness, respectively. Moreover, since we require the success probability to always be greater than $1/2$, given that one can always guess the correct result with probability $1/2$, from now on it will be convenient to use the \emph{bias} $\varepsilon$ of the prediction, defined as $\varepsilon = 2p - 1$, instead of its success probability $p$.

We start with $n\overset{\varepsilon}{\mapsto}m$ $f$-RAC, the $f$-classical random access code on $m$ bits with bias $\varepsilon$.
\begin{defn}
    An $n\overset{\varepsilon}{\mapsto}m$ $f$-RAC with PR is an encoding map $E:\{-1,1\}^n\times\Pi_A\to\{-1,1\}^m$ satisfying the following: for every $S\in\mathcal{S}_n^k$ there is a decoding map $D_{S}:\{-1,1\}^m\times\Pi_B \to\{-1,1\}$ such that $\operatorname{Pr}_{r_A,r_B}[D_{S}(E(x,r_A),r_B) = f(x_S)] \geq \frac{1}{2} + \frac{1}{2}\varepsilon$ for all $x\in\{-1,1\}^n$.
\end{defn}
\begin{defn}
    An $n\overset{\varepsilon}{\mapsto}m$ $f$-RAC with SR is an encoding map $E:\{-1,1\}^n\times\Pi\to\{-1,1\}^m$ satisfying the following: for every $S\in\mathcal{S}_n^k$ there is a decoding map $D_{S}:\{-1,1\}^m\times\Pi\to\{-1,1\}$ such that $\operatorname{Pr}_{r}[D_{S}(E(x,r),r) = f(x_S)] \geq \frac{1}{2} + \frac{1}{2}\varepsilon$ for all $x\in\{-1,1\}^n$.
\end{defn}
We define $n\overset{\varepsilon}{\mapsto}m$ $f$-QRAC, the $f$-quantum random access code on $m$ qubits with bias $\varepsilon$.
\begin{defn}
    An $n\overset{\varepsilon}{\mapsto}m$ $f$-QRAC with PR is an encoding map $E:\{-1,1\}^n\to\mathbb{C}^{2^m\times 2^m}$ that assigns an $m$-qubit density matrix to every $x\in\{-1,1\}^n$ and satisfies the following: for every $S\in\mathcal{S}_n^k$ there is a POVM $M^S = \{M^S_{-1}, M^S_{1}\}$ such that $\operatorname{Tr}(M^S_{f(x_S)}\cdot E(x)) \geq \frac{1}{2} + \frac{1}{2}\varepsilon$ for all $x\in\{-1,1\}^n$.
\end{defn}
\begin{defn}
    \label{def:def9}
    An $n\overset{\varepsilon}{\mapsto}m$ $f$-QRAC with SR is an encoding map $E:\{-1,1\}^n\times \Pi\to\mathbb{C}^{2^m}$ that assigns an $m$-qubit pure state to every $x\in\{-1,1\}^n$ and satisfies the following: for every $S\in\mathcal{S}_n^k$ there is a set of POVMs $\{M^S_r\}_{r\in\Pi}$, with $M^S_{r} = \{M^S_{-1,r}, M^S_{1,r}\}$, such that $\mathbb{E}_{r}[E(x,r)^\dagger M^S_{f(x_S),r} E(x,r)] \geq \frac{1}{2} + \frac{1}{2}\varepsilon$ for all $x\in\{-1,1\}^n$.
\end{defn}
Similarly, we define $n\overset{\varepsilon}{\mapsto}m$ $f$-EARAC, the $f$-entanglement-assisted random access code on $m$ bits with bias $\varepsilon$.
\begin{defn}
    \label{def:def10}
    An $n\overset{\varepsilon}{\mapsto}m$ $f$-EARAC is an $n\overset{\varepsilon}{\mapsto}m$ $f$-RAC with SR where the encoding and decoding parties share an unlimited amount of entangled quantum states.
\end{defn}
\noindent Due to shared entanglement being a source of SR, we already include SR in $f$-EARACs. We note that~\cite{pawlowski2010entanglement} focused on EARACs without SR.

Finally, we define $n\overset{\varepsilon}{\mapsto}m$ $f$-PRRAC, the $f$-Popescu-Rohrlich random access code on $m$ bits with bias $\varepsilon$. A Popescu-Rohrlich box~\cite{popescu1994quantum} is a bipartite system shared by two parties with two inputs $x,y\in\{0,1\}$ and two outputs $a,b\in\{0,1\}$ and is defined by the joint probability distribution
\begin{align*}
    \operatorname{Pr}[ab|xy] = \begin{cases}
        \frac{1}{2} &~\text{for}~ a\oplus b=xy,\\
        0 &~\text{otherwise}.
    \end{cases}
\end{align*}
\begin{defn}
    An $n\overset{\varepsilon}{\mapsto}m$ $f$-PRRAC is an $n\overset{\varepsilon}{\mapsto}m$ $f$-RAC with SR where the encoding and decoding parties share an unlimited amount of Popescu-Rohrlich boxes.
\end{defn}

In Section~\ref{sec:sec4} we devise encoding-decoding strategies for all the $f$-random access codes just defined, thus deriving \emph{lower bounds} on their biases given the encoding/decoding parameters $n,m$ and $k$. These $f$-random access codes are built based on previous ideas from~\cite{ambainis2008quantum,ambainis1999dense,pawlowski2010entanglement}. The Boolean function $f:\{-1,1\}^k\to\{-1,1\}$ that needs to be evaluated directly influences the final bias and such influence in our results is captured by the single quantity called \emph{noise stability}~\cite{benjamini1999noise,o2003computational}. Informally it is a measure of how resilient to noise a Boolean function is. Given a uniformly random input $x\in\{-1,1\}^n$, one might imagine a process that flips each bit of $x$ independently with some probability $\frac{1}{2} - \frac{1}{2}q$, where $q\in[-1,1]$, which leads to some final string $y\in\{-1,1\}^n$. The noise stability $\operatorname{Stab}_q[f]$ of $f$ with parameter $q$ is the correlation between $f(x)$ and $f(y)$ (see Section~\ref{sec:sec3} for a formal definition).

Our positive results can be summarized by the following theorem.
\begin{thm}
    \label{thr:thr1.3}
    Let $f:\{-1,1\}^k\to\{-1,1\}$ be a Boolean function and $\operatorname{Stab}_q[f]$ its noise stability with parameter $q$.
    \begin{enumerate}[(a)]
        \item Let $\ell\in\mathbb{N}$. If $m=\Omega(\ell\log{n})$ and $k=o(\sqrt{\ell})$, there is an $n\overset{\varepsilon}{\mapsto}m$ $f$-RAC with PR and bias $\varepsilon \geq (1-o_n(1))\operatorname{Stab}_q[f]$ with $q \geq \sqrt{\frac{m}{n} - \frac{5\log_2{(n/\ell)}}{n/\ell}}$.
        \item If $k=o(\sqrt{m})$, there is an $n\overset{\varepsilon}{\mapsto}m$ $f$-RAC with SR and $\varepsilon \geq (1-o_n(1)) \operatorname{Stab}_q[f]$ with $q \geq \sqrt{\frac{m}{2 n}}$.
        \item If $k=o(\sqrt{m})$, there is an $n\overset{\varepsilon}{\mapsto}m$ $f$-QRAC with SR and $\varepsilon \geq (1-o_n(1)) \operatorname{Stab}_q[f]$ with $q \geq \sqrt{\frac{8m}{3\pi n}}$.
        \item If $k=o(\sqrt{m})$, there is an $n\overset{\varepsilon}{\mapsto}m$ $f$-EARAC with $\varepsilon \geq (1-o_n(1)) \operatorname{Stab}_q[f]$ and $q = \sqrt{\frac{m}{n}}$.
        \item For any $n\in\mathbb{N}$, there is an $n\overset{1}{\mapsto}1$ $f$-PRRAC.
    \end{enumerate}
\end{thm}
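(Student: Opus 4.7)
My plan for parts~(a)--(d) is to reduce the $f$-(Q)RAC problem to the single-bit (Q)RACs already catalogued in Theorems~\ref{thr:thr2.1}--\ref{thr:thr2.3} via noise stability, and to handle part~(e) separately using the known trivialization of communication complexity under PR boxes. The key observation is that if, for each coordinate $i\in[n]$, Bob can produce a prediction $\tilde{x}_i\in\{-1,1\}$ with $\Pr[\tilde{x}_i=x_i]=\tfrac{1}{2}+\tfrac{q}{2}$ and with the $\tilde{x}_i$ mutually independent conditioned on $x$, then outputting $f(\tilde{x}_S)$ is a guess for $f(x_S)$ whose expected correlation equals $\mathbb{E}[f(\tilde{x}_S)f(x_S)]=\operatorname{Stab}_q[f]$---exactly the claimed bias. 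So the whole problem reduces to (i) manufacturing the largest $q$ one can out of $m$ (qu)bits and (ii) arguing that small deviations from independence cost only a $(1-o_n(1))$ factor.

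\textbf{Block construction.} For (a)--(d) I would partition $[n]$ into $\ell$ (for (a)) or $m$ (for (b)--(d)) equal-sized blocks and run a fresh single-bit (Q)RAC/EARAC on each block, with disjoint randomness/entanglement across blocks so that predictions in distinct blocks are automatically independent. For parts (b)--(d), plugging a block-length of $n/m$ into Theorems~\ref{thr:thr2.1}, \ref{thr:thr2.2} and \ref{thr:thr2.3} yields per-bit biases of order $\sqrt{2m/(\pi n)}$, $\sqrt{8m/(3\pi n)}$ and $\sqrt{m/n}$ respectively, which meet or exceed the stated $\sqrt{m/(2n)}$, $\sqrt{8m/(3\pi n)}$, $\sqrt{m/n}$. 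For (a), plugging $n/\ell$ and $m/\ell$ into Theorem~\ref{thr:thr2.1a} gives $m/\ell \leq (1-H(\tfrac{1+q}{2}))(n/\ell)+7\log_2(n/\ell)$, and combining with the standard inequality $1-H(\tfrac{1+q}{2})\leq q^2$ yields $q\geq \sqrt{m/n-7\log_2(n/\ell)/(n/\ell)}$; the stated constant $5$ follows after a slightly sharper form of that inequality or by absorbing the extra slack into the $(1-o_n(1))$ prefactor. The hypothesis $m=\Omega(\ell\log n)$ is exactly what keeps this correction subdominant.

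\textbf{Collision handling.} To force the $k$ elements of $S$ to land in $k$ distinct blocks, I would compose the construction with a uniformly random permutation of $[n]$; for (b)--(d) this permutation is drawn from the shared randomness, while for (a) it can be piggybacked onto the $O(\log n)$ additive slack per block already allowed by Theorem~\ref{thr:thr2.1a}. A birthday calculation bounds the probability of a collision by $O(k^2/\ell)$ in (a) and $O(k^2/m)$ in (b)--(d), which is $o_n(1)$ under the hypotheses $k=o(\sqrt{\ell})$ and $k=o(\sqrt{m})$ respectively. Since $|f|\leq 1$, the bad event costs at most a $(1-o_n(1))$ multiplicative factor on top of $\operatorname{Stab}_q[f]$.

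\textbf{Part (e) and main obstacle.} Part (e) is immediate from van Dam's observation that a PR box together with a single bit of one-way communication computes the AND of two input bits exactly, and hence by iteration any Boolean function $g(x,y)$ of the two parties' inputs; applied to $g(x,S)=f(x_S)$ this gives the $n\overset{1}{\mapsto}1$ $f$-PRRAC directly. The most delicate step in the rest of the proof is (a), where the absence of shared randomness forces the random permutation to be smuggled into the $O(\log n)$ additive slack per block while still leaving $m/\ell$ bits available for the code of Theorem~\ref{thr:thr2.1a}; a secondary concern is verifying that the per-bit constant losses, the $H$-versus-$q^2$ slack and the birthday correction can all be absorbed simultaneously into the single $(1-o_n(1))$ prefactor in front of $\operatorname{Stab}_q[f]$.
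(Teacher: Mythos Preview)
Your overall architecture matches the paper's: block-partition, run a single-bit (Q)RAC/EARAC per block, handle collisions by a birthday bound, and identify the per-bit bias with the noise-stability parameter $q$. Part (e) also matches in spirit (the paper writes out the explicit PR-box pyramid from \cite{pawlowski2009information} rather than citing van Dam's trivialisation directly, but the content is the same).

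There is, however, a real gap in your ``key observation''. For \emph{fixed} $x$, if $\tilde x_S\sim N_q(x_S)$ then
\[
\mathbb{E}\big[f(x_S)f(\tilde x_S)\big]=f(x_S)\cdot \operatorname{T}_q f(x_S),
\]
which is in general \emph{not} equal to $\operatorname{Stab}_q[f]$; the latter is only the average of the former over uniform $x_S$. For instance, for $f=\operatorname{MAJ}_3$ at $x_S=(1,1,-1)$ one gets $(q+q^3)/2$, which is strictly smaller than $\operatorname{Stab}_q[\operatorname{MAJ}_3]=(3q+q^3)/4$ for all $q<1$. Since the $f$-(Q)RAC definitions demand the bias hold for \emph{every} $x$, your protocol as stated does not achieve bias $\operatorname{Stab}_q[f]$. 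The paper closes this by an additional input-randomisation step you omit: Alice and Bob share a uniform string $r\in\{-1,1\}^n$, Alice encodes $x\cdot r$, and Bob multiplies back by $r$ after decoding. This makes the effective input uniform, so the worst-case bias coincides with the average and hence with $\operatorname{Stab}_q[f]$. You need this $r$ in all of (a)--(d), on top of the permutation.

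The second gap is the one you yourself flag in part (a). A uniform permutation of $[n]$ together with $r$ needs $\Theta(n\log n)+n$ bits to specify, far more than the $O(\ell\log(n/\ell))$ total slack the $\ell$ copies of the code from Theorem~\ref{thr:thr2.1a} provide; ``piggybacking onto the $O(\log n)$ additive slack per block'' does not work as stated. The paper's actual mechanism is Newman's theorem (Theorem~\ref{thr:thr4.00}): once the protocol is written with shared randomness $(\pi,r)$, there is a set of only $\poly(n)$ such pairs achieving essentially the same bias for all $(x,S)$, and $\log_2$ of that count fits into the existing slack. Without Newman (or an equivalent derandomisation argument), the private-randomness claim in (a) is unproved.
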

Results (a), (b), (c) and (d) use an encoding scheme reminiscent of the concatenation idea from \cite{pawlowski2009information,pawlowski2010entanglement,tuanuasescu2020optimal} (and suggested to us by Ronald de Wolf). The underlying idea is to randomly break the initial string $x\in\{-1,1\}^n$ into different `blocks' and encode them via a standard RAC/QRAC/EARAC. Result (a) breaks $x$ into $\ell$ blocks and employs the $n/\ell\mapsto m/\ell$ RAC from Theorem~\ref{thr:thr2.1a} on every block, each with $n/\ell$ elements, while in results (b)/(c)/(d) we employ the $n/m\mapsto 1$ RAC/QRAC/EARAC from Theorems~\ref{thr:thr2.1}/\ref{thr:thr2.2}/\ref{thr:thr2.3} in order to encode $m$ blocks, each with $n/m$ elements, into a single (qu)bit each, resulting in $m$ encoded (qu)bits. With high probability all the bits from the needed string $x_S\in\{-1,1\}^k$ will be encoded into different blocks and therefore can be decoded and $f$ evaluated. The decoded string $y\in\{-1,1\}^k$ can be viewed as a `noisy' $x_S$, to which the noise stability framework can be applied. The bias of the base RAC/QRAC/EARAC thus becomes the parameter $q$ in the noise stability of the corresponding $f$-random access code. As a quick remark, since we opted to lower-bound the parameters $q$ in Theorem~\ref{thr:thr1.3}, in result~(b) $q$ does not exactly equal the bias from Theorem~\ref{thr:thr2.1}. One could write, though, $q\approx\sqrt{\frac{2m}{\pi n}}$. 

Result (a) is our strongest bound, since it also applies to all other $f$-random access codes. Moreover, there is some freedom in setting the number of blocks $\ell$, since the number of encoded bits in Theorem~\ref{thr:thr2.1a} is not fixed to a single number (as opposed to Theorems~\ref{thr:thr2.1},~\ref{thr:thr2.2} and~\ref{thr:thr2.3}). The result is a trade-off between the number of bits $k$ of the Boolean function and the number of encoded bits $m$. However, the number of encoded bits in result (a) is limited to $m=\Omega(\log{n})$, a characteristic inherited from the RAC in Theorem~\ref{thr:thr2.1a}. It is possible to go below this limit by using SR, as demonstrated by results (b), (c) and (d).

The above results show that quantum resources offer a modest advantage over the classical $f$-random access code. On the other hand, result (e) demonstrates that stronger-than-quantum resources like Popescu-Rohrlich boxes can lead to extremely powerful $f$-random access codes. This is a consequence of violating Information Causality~\cite{pawlowski2009information}, since one bit transfer allows the access to \emph{any} bit in a database via Popescu-Rohrlich boxes. From $x\in\{-1,1\}^n$ a long bit-string $x_f\in\{-1,1\}^t$, where $t = |\mathcal{S}_n^k|$, can be constructed with the values $f(x_S)$ for all $S\in\mathcal{S}_n^k$. All bits from $x_f$ are readable with the aid of Popescu-Rohrlich boxes, with non-signaling constraining the readout to just one bit. The protocol for $f$-PRRACs is taken from~\cite{pawlowski2009information} and uses a pyramid of Popescu-Rohrlich boxes and nests a van Dam's protocol~\cite{van2013implausible}.

In Section~\ref{sec:sec5} we prove an \emph{upper bound} on the bias of any $f$-QRAC with SR (and $f$-RAC) using the same method of the hypercontractive inequality for matrix-valued functions from~\cite{ben2008hypercontractive}.
\begin{thm}
    \label{thr:thr1.5}
    Let $f:\{-1,1\}^k\to\{-1,1\}$ be a Boolean function. For any $n\overset{\varepsilon}{\mapsto}m$ $f$-QRAC with SR and $k = o(n)$ the following holds: for any $\eta > 2\ln{2}$ there is a constant $C_\eta$ such that
    \begin{align}
        \varepsilon \leq C_\eta \sum_{\ell = 0}^k L_{1,\ell}(f)\left(\frac{\eta m}{n}\right)^{\ell/2}, \label{eq:eq1.5}
    \end{align}
    where $L_{1,\ell}(f) = \sum_{\substack{T\subseteq[k]\\ |T| = \ell}}\big|\widehat{f}(T)\big|$ is the $1$-norm of the $\ell$-th level of the Fourier transform of $f$.
\end{thm}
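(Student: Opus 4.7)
The plan is to follow the strategy of Theorem~\ref{thr:thr1.4} in~\cite{ben2008hypercontractive}: Fourier-expand $f$ and apply their matrix-valued hypercontractive estimate at each Fourier level separately, instead of only at the top level as in the pure parity case.

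First I would reduce the success-probability guarantee to a Fourier estimate on the ``bias function''. Let $A^S_r := M^S_{1,r} - M^S_{-1,r}$, a Hermitian contraction on $\mathbb{C}^{2^m}$, and define $B_S(x) := \E_r\bracket{E(x,r)}{A^S_r}{E(x,r)} \in [-1,1]$. Since $M^S_{1,r} + M^S_{-1,r} = I$, Definition~\ref{def:def9} is equivalent to the pointwise inequality $f(x_S)\,B_S(x) \geq \varepsilon$ for every $x\in\{-1,1\}^n$ and $S\in\mathcal{S}_n^k$, which averaged over uniform $x$ gives $\varepsilon \leq \E_x[f(x_S)\,B_S(x)]$ for each $S$. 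Writing $f(y) = \sum_{T\subseteq[k]}\widehat{f}(T)\prod_{i\in T}y_i$ and using $\prod_{i\in T}x_{S_i} = \prod_{j\in U_{T,S}}x_j$ with $U_{T,S} := \{S_i : i\in T\} \subseteq [n]$ of size $|T|$, the right-hand side equals $\sum_T \widehat{f}(T)\,\widehat{B_S}(U_{T,S})$, so by the triangle inequality termwise,
\begin{align*}
    \varepsilon \leq \sum_{T\subseteq[k]} |\widehat{f}(T)| \cdot |\widehat{B_S}(U_{T,S})|.
\end{align*}

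The key technical step is then the uniform per-level bound
\begin{align*}
    |\widehat{B_S}(U)| \leq C'_\eta \left(\frac{\eta m}{n}\right)^{|U|/2} \qquad \text{for every } U\subseteq[n] \text{ and every } S\in\mathcal{S}_n^k,
\end{align*}
valid for any $\eta > 2\ln 2$ and some constant $C'_\eta$. Writing $\widehat{B_S}(U) = \E_r\operatorname{Tr}(A^S_r\,\widehat{\rho}_r(U))$, where $\widehat{\rho}_r(U)$ is the $U$-th matrix-valued Fourier coefficient of $x \mapsto \rho_{x,r} := E(x,r)E(x,r)^\dagger$, this bound follows by bounding $|\operatorname{Tr}(A^S_r\,\widehat{\rho}_r(U))|$ via H\"older between Schatten norms using $\|A^S_r\|_\infty \leq 1$, and controlling $\|\widehat{\rho}_r(U)\|$ via the matrix-valued $(2,q)$-hypercontractive inequality of~\cite{ben2008hypercontractive} applied to $\rho_{x,r}$ together with $\operatorname{Tr}\rho_{x,r} = 1$. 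This is the exact mechanism driving the proof of Theorem~\ref{thr:thr1.4}, where it is specialized to $|U|=k$; however, the derivation is already level-wise and transfers verbatim to general $\ell = |U|$, with the factor $\eta > 2\ln 2$ arising from optimizing the Schatten exponent.

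Substituting the level-wise bound into the first display and regrouping by $\ell = |T|$,
\begin{align*}
    \varepsilon \leq C'_\eta\sum_{T\subseteq[k]}|\widehat{f}(T)|\left(\frac{\eta m}{n}\right)^{|T|/2} = C'_\eta\sum_{\ell=0}^{k}L_{1,\ell}(f)\left(\frac{\eta m}{n}\right)^{\ell/2},
\end{align*}
which is the claim upon setting $C_\eta := C'_\eta$ (the $\ell=0$ contribution is vacuous since $|\widehat{B_S}(\emptyset)| \leq \|B_S\|_\infty \leq 1$). The hard part will be isolating the per-level Fourier estimate from the argument behind Theorem~\ref{thr:thr1.4}, which is stated only at the top level; I expect this to be a careful rereading rather than new content, since the hypercontractive inequality behaves identically on each level and the bounds used on $\rho_{x,r}$ (operator-norm contraction of the observable, trace-one constraint on the state) are level-independent, with the hypothesis $k = o(n)$ inherited at exactly the same point as in~\cite{ben2008hypercontractive}.
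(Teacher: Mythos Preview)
Your reduction to the Fourier side is fine, and the overall shape---expand $f$, push the estimate to matrix Fourier coefficients of the encoding, invoke the hypercontractive lemma level by level---matches the paper. The gap is in the ``key technical step'': the pointwise bound
\[
|\widehat{B_S}(U)| \;\leq\; C'_\eta\left(\frac{\eta m}{n}\right)^{|U|/2}\quad\text{for every }U\subseteq[n]
\]
is not what the hypercontractive inequality gives you, and it is not what drives Theorem~\ref{thr:thr1.4} either. Lemma~6 of~\cite{ben2008hypercontractive} (Theorem~\ref{thm:thm2.2.c1} here) bounds the \emph{sum} $\sum_{U}\delta^{|U|}\|\widehat{\rho}(U)\|_{\operatorname{tr}}^2$; restricting to a single $U$ of size $\ell$ and optimising $\delta$ yields only $\|\widehat{\rho}(U)\|_{\operatorname{tr}}^2 \lesssim ((2e\ln 2)m/\ell)^{\ell}$, with no $n$ in sight. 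The factor $n^{-\ell/2}$ appears precisely because the hypercontractive sum distributes a fixed budget across $\binom{n}{\ell}$ coefficients, so it is the \emph{average} $\E_{U\sim\binom{[n]}{\ell}}\|\widehat{\rho}(U)\|_{\operatorname{tr}}^2$ that picks up the $\binom{n}{\ell}^{-1}\sim(\ell/n)^{\ell}$ factor. A pointwise bound cannot hold: an encoding that depends only on $\chi_U(x)$ for one fixed $U$ has $\|\widehat{\rho}(U)\|_{\operatorname{tr}}$ of order $1$ regardless of $n$.

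Concretely, the fix is to keep the expectation over $S\in\mathcal{S}_n^k$ from the start (use the average-case reformulation via Fact~\ref{lem:lem3.5.c3}, as the paper does), carry it through the triangle inequality, and then observe that for fixed $T\subseteq[k]$ with $|T|=\ell$ the induced distribution of $U_{T,S}$ under $S\sim\mathcal{S}_n^k$ is uniform on $\binom{[n]}{\ell}$. Only then can you apply Jensen and the hypercontractive lemma to bound $\E_{U\sim\binom{[n]}{\ell}}\|\widehat{\rho}(U)\|_{\operatorname{tr}}^2$. The hypothesis $k=o(n)$ enters at the final Stirling step converting $\binom{n}{\ell}^{-1}(\,(2e\ln 2)m/\ell\,)^{\ell}$ into $(\eta m/n)^{\ell}$; it is not an artifact of any per-coefficient estimate.
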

One can see that the above result is a generalization of Theorem~\ref{thr:thr1.4}. Indeed, for Parity on $k$ bits, $L_{1,\ell}(\operatorname{XOR}_k) = 1$ iff $\ell=k$, and so Eq.~(\ref{eq:eq1.4}) is recovered. The following corollary from Theorem~\ref{thr:thr1.5} helps to compare the bias upper bound to the bias lower bounds from Theorem~\ref{thr:thr1.3}.
\begin{cor}
    Let $f:\{-1,1\}^k\to\{-1,1\}$ be a Boolean function. For any $n\overset{\varepsilon}{\mapsto}m$ $f$-QRAC with SR and $k = o(n)$ the following holds: for any $\eta > 2\ln{2}$ there is a constant $C_\eta$ such that
    \begin{align*}
         \varepsilon \leq C_\eta 2^{\operatorname{deg}(f)-1}\operatorname{Stab}_q[f],
    \end{align*}
    where $\operatorname{deg}(f) = \max\{|S|:\widehat{f}(S)\neq 0\}$ is the degree of $f$ and $q=\sqrt{\frac{\eta m}{n}}$.
\end{cor}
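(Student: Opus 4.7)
The plan is to derive the corollary directly from Theorem~\ref{thr:thr1.5} by replacing each level-$\ell$ norm $L_{1,\ell}(f)$ by the corresponding level-$\ell$ Fourier weight appearing in the noise stability. Setting $q=\sqrt{\eta m/n}\geq 0$, I would first repackage the bound of Eq.~(\ref{eq:eq1.5}) as a single Fourier sum,
\begin{equation*}
\sum_{\ell=0}^{k} L_{1,\ell}(f)\,q^\ell \;=\; \sum_{T\subseteq[k]} |\widehat{f}(T)|\,q^{|T|},
\end{equation*}
and recall that by definition $\operatorname{Stab}_q[f]=\sum_{T\subseteq[k]}\widehat{f}(T)^2\,q^{|T|}$. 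Since $q^{|T|}\geq 0$, the claim reduces to establishing the pointwise comparison $|\widehat{f}(T)|\leq 2^{d-1}\,\widehat{f}(T)^2$ for every $T\subseteq[k]$, where $d=\operatorname{deg}(f)$.

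The key ingredient is the standard granularity fact for Boolean functions of bounded degree: if $f:\{-1,1\}^k\to\{-1,1\}$ has $\operatorname{deg}(f)=d\geq 1$, then every nonzero Fourier coefficient $\widehat{f}(T)$ lies in $2^{1-d}\,\Z$, and in particular $|\widehat{f}(T)|\geq 2^{1-d}$ whenever $\widehat{f}(T)\neq 0$ (this is a classical exercise, see e.g.\ O'Donnell's \emph{Analysis of Boolean Functions}). Rewriting this as $\widehat{f}(T)^2=|\widehat{f}(T)|\cdot|\widehat{f}(T)|\geq 2^{1-d}|\widehat{f}(T)|$ gives $|\widehat{f}(T)|\leq 2^{d-1}\widehat{f}(T)^2$ for nonzero coefficients; the inequality is trivial when $\widehat{f}(T)=0$.

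Multiplying by $q^{|T|}\geq 0$ and summing over $T$ yields
\begin{equation*}
\sum_{T\subseteq[k]} |\widehat{f}(T)|\,q^{|T|} \;\leq\; 2^{d-1}\sum_{T\subseteq[k]} \widehat{f}(T)^2\,q^{|T|} \;=\; 2^{\operatorname{deg}(f)-1}\operatorname{Stab}_q[f],
\end{equation*}
which, substituted into Theorem~\ref{thr:thr1.5}, gives exactly $\varepsilon\leq C_\eta\, 2^{\operatorname{deg}(f)-1}\operatorname{Stab}_q[f]$. The degenerate case $\operatorname{deg}(f)=0$, where $f$ is constant and the $f$-QRAC problem is trivial, can be handled either by direct inspection or by absorbing a multiplicative factor of $2$ into the constant $C_\eta$. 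The only step requiring any care is the invocation of the granularity lemma for nonzero Fourier coefficients; everything else is a one-line Fourier manipulation.
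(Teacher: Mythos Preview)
Your proof is correct and follows essentially the same route as the paper: both start from Theorem~\ref{thr:thr1.5}, invoke the $2^{1-\operatorname{deg}(f)}$-granularity of the Fourier spectrum (the paper likewise cites \cite[Exercise~1.11]{o2014analysis}) to obtain $|\widehat{f}(T)|\leq 2^{\operatorname{deg}(f)-1}\widehat{f}(T)^2$, and sum against $q^{|T|}$ to land on $\operatorname{Stab}_q[f]$. The only cosmetic difference is that the paper phrases the comparison at the aggregate level $L_{1,\ell}(f)\leq 2^{\operatorname{deg}(f)-1}W^\ell[f]$ rather than coefficient by coefficient.
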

Taking $\operatorname{deg}(f)$ to be upper-bounded by a constant (for example, if $k=O(1)$), our bias upper bound matches our bias lower bounds for $f$-RAC/QRAC with SR up to a global multiplicative constant and a multiplicative constant $\sqrt{\eta}$ in the parameter~$q$. We conjecture that the parameter $q$ can be improved to $\sqrt{\frac{m}{n}}$, which might require a stronger version of the hypercontractive inequality.
Other corollaries from Theorem~\ref{thr:thr1.5} are derived in Section~\ref{sec:sec5} and compared to our bias lower bounds.

Upper bound~(\ref{eq:eq1.5}) does not apply to $f$-EARACs. Previously, it was known that for the special case of standard EARACs ($m=1$), the bias $\varepsilon$ is upper-bounded by $1/\sqrt{n}$ (Theorem~\ref{thr:thr2.3}). This upper bound can generalised to EARACs with $m>1$ assuming an independence condition (Section~\ref{sec:sec5.3}). The resulting bound is $\varepsilon \le \sqrt{m/n}$. We view this as evidence that the bias lower bound for the general case of $f$-EARACs given in Theorem~\ref{thr:thr1.3} should actually be tight.

Regarding the quantity $\operatorname{Stab}_q[f]$ itself, it can be nicely related to the Fourier coefficients of $f$ (see Theorem~\ref{thr:thr2.0} in the next section). We briefly mention the noise stability for a few functions. For Parity ($\operatorname{XOR}_k$), $\operatorname{Stab}_q[\operatorname{XOR}_k] = q^k$, and, more generally, for any function $\chi_S(x) = \prod_{i\in S}x_i$, $\operatorname{Stab}_q[\chi_S] = q^{|S|}$. As for the Majority function ($\operatorname{MAJ}_k$), one can show that~\cite[Theorem~5.18]{o2014analysis} $\frac{2}{\pi}\arcsin{q} \leq \operatorname{Stab}_q[\operatorname{MAJ}_k] \leq \frac{2}{\pi}\arcsin{q} + O\left(\frac{1}{\sqrt{1-q^2}\sqrt{k}}\right)$. Other examples can be found in~\cite{mossel2003noise}. Moreover, a randomized algorithm for approximating the noise stability of monotone Boolean functions up to relative error was proposed in~\cite{rubinfeld2019approximating}.

In Section~\ref{sec:sec4} we present protocols for all our $f$-random access codes, and in Section~\ref{sec:sec5} we derive an upper bound on the bias of $f$-QRACs with SR.

\section{Preliminaries}
\label{sec:sec3}

We shall briefly revise some results from Boolean analysis that are going to be useful. For an introduction to the analysis of Boolean functions, see O'Donnell's book~\cite{o2014analysis} or de Wolf's paper~\cite{de2008brief}. In the following, we write $[n] = \{1,\dots,n\}$ and $\mathbb{S}_n$ is the set of all permutations of $[n]$. As before, let $\mathcal{S}_n^k = \{(S_i)_{i=1}^k\in [n]^k ~|~ S_i \neq S_j ~\forall i,j\}$ be the set of sequences of different elements from $[n]$ with length $k$ and let $x_S\in\{-1,1\}^k$ denote the substring of $x\in\{-1,1\}^n$ specified by $S\in \mathcal{S}_n^k$.

The inner product $\langle \cdot,\cdot\rangle$ on the vector space of all functions $f:\{-1,1\}^n\to\mathbb{R}$ is defined by
\begin{align*}
    \langle f,g\rangle = \frac{1}{2^n}\sum_{x\in\{-1,1\}^n}f(x)g(x).
\end{align*}
Every function $f:\{-1,1\}^n\to\mathbb{R}$ can be uniquely expressed as a multilinear polynomial, its \emph{Fourier expansion}, as
\begin{align*}
    f(x) = \sum_{S\subseteq[n]}\widehat{f}(S)\chi_S(x),
\end{align*}
where, for $S\subseteq[n]$, $\chi_S:\{-1,1\}^n\to\{-1,1\}$ is defined by
\begin{align*}
    \chi_S(x) = \prod_{i\in S}x_i.
\end{align*}
The real number $\widehat{f}(S)$ is called the Fourier coefficient of $f$ on $S$ and is given by
\begin{align*}
    \widehat{f}(S) = \langle f,\chi_S\rangle = \frac{1}{2^n}\sum_{x\in\{-1,1\}^n}f(x)\chi_S(x).
\end{align*}

An important and useful concept for Boolean functions is \emph{noise stability}. As previously mentioned, it is a measure of how resilient to noise a Boolean function is, and is defined from the concept of $q$-correlated pairs of random strings given below.
\begin{defn}[{\cite[Definitions 2.40 and 2.41]{o2014analysis}}]
    \label{def:def1}
    Let $q\in[-1,1]$. For fixed $x\in\{-1,1\}^n$ we write $y\sim N_q(x)$ to denote that the random string $y$ is drawn as follows: for each $i\in[n]$ independently,
    \begin{align*}
        y_i = \begin{cases}
            x_i &\text{with probability } \frac{1}{2} + \frac{1}{2}q,\\
            -x_i &\text{with probability } \frac{1}{2} - \frac{1}{2}q.
        \end{cases}
    \end{align*}
    We say that $y$ is $q$-correlated to $x$. If $x\sim\{-1,1\}^n$ is drawn uniformly at random and then $y\sim N_q(x)$, we say that $(x,y)$ is a $q$-correlated pair of random strings.
\end{defn}
Given these definitions, we can formally define the concept of noise stability, which measures the correlation between $f(x)$ and $f(y)$ when $(x,y)$ is a $q$-correlated pair.
\begin{defn}[{\cite[Definition 2.42]{o2014analysis}}]
    For $f:\{-1,1\}^n\to\mathbb{R}$ and $q\in[-1,1]$, the \emph{noise stability} of $f$ at $q$ is
    \begin{align*}
        \operatorname{Stab}_q[f] = \operatorname*{\mathbb{E}}_{\substack{(x,y) \\ q\operatorname{-correlated}}}\left[f(x)f(y)\right].
    \end{align*}
\end{defn}
The noise stability of $f$ is nicely related to $f$'s Fourier coefficients as stated in the following theorem.
\begin{thm}[{\cite[Theorem 2.49]{o2014analysis}}]
    \label{thr:thr2.0}
    For $f:\{-1,1\}^n\to\mathbb{R}$ and $q\in[-1,1]$,
    \begin{align*}
        \operatorname{Stab}_q[f] = \sum_{k=0}^n q^k W^k[f],
    \end{align*}
    where $W^k[f] = \sum_{\substack{S\subseteq[n]\\|S|=k}}\widehat{f}(S)^2$ is the Fourier weight of $f$ at degree $k$.
\end{thm}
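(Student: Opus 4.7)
The plan is to compute $\mathbb{E}[f(x)f(y)]$ directly by expanding $f$ in its Fourier basis and exploiting the fact that, under a $q$-correlated pair $(x,y)$, the coordinates are jointly independent across $i\in[n]$. This reduces the whole computation to evaluating a single expectation $\mathbb{E}[\chi_S(x)\chi_T(y)]$ for each pair of sets $S,T\subseteq[n]$.

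First I would substitute the Fourier expansion $f(x) = \sum_S \widehat{f}(S)\chi_S(x)$ into the definition of noise stability, giving
\begin{align*}
    \operatorname{Stab}_q[f] = \sum_{S,T\subseteq[n]} \widehat{f}(S)\widehat{f}(T)\, \mathbb{E}_{(x,y)}\bigl[\chi_S(x)\chi_T(y)\bigr].
\end{align*}
Next, I would factor the joint expectation over the $n$ independent coordinates:
\begin{align*}
    \mathbb{E}_{(x,y)}\bigl[\chi_S(x)\chi_T(y)\bigr] = \prod_{i\in S\cap T}\mathbb{E}[x_i y_i]\cdot \prod_{i\in S\setminus T}\mathbb{E}[x_i]\cdot \prod_{i\in T\setminus S}\mathbb{E}[y_i].
\end{align*}
The key step is to check that $\mathbb{E}[x_i]=0$ (since $x_i$ is uniform on $\{-1,1\}$), that $\mathbb{E}[y_i]=0$ (by tower property, since $\mathbb{E}[y_i\mid x_i]=qx_i$ and $\mathbb{E}[x_i]=0$), and that $\mathbb{E}[x_iy_i]=q$ (since $x_i^2=1$ gives $\mathbb{E}[x_iy_i]=\mathbb{E}[x_i\cdot\mathbb{E}[y_i\mid x_i]] = \mathbb{E}[qx_i^2]=q$). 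Consequently the full expectation vanishes unless $S=T$, in which case it equals $q^{|S|}$.

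Plugging back in, only the diagonal terms survive, yielding
\begin{align*}
    \operatorname{Stab}_q[f] = \sum_{S\subseteq[n]} \widehat{f}(S)^2\, q^{|S|} = \sum_{k=0}^n q^k \sum_{\substack{S\subseteq[n]\\|S|=k}} \widehat{f}(S)^2 = \sum_{k=0}^n q^k W^k[f],
\end{align*}
which is the claim. The only mildly delicate point is verifying the coordinate-wise independence used to factor the expectation, and double-checking the cross term $\mathbb{E}[x_iy_i]=q$ against Definition~\ref{def:def1}; everything else is a routine manipulation of Fourier characters.
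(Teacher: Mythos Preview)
Your argument is correct and is essentially the same computation the paper points to: the paper does not give a self-contained proof but remarks that the result follows from the noise operator via $\operatorname{Stab}_q[f] = \langle f,\operatorname{T}_q f\rangle$ together with Proposition~\ref{prop:prop1}, which amounts to first computing $\mathbb{E}_{y\sim N_q(x)}[\chi_T(y)] = q^{|T|}\chi_T(x)$ and then applying Plancherel. You simply merge those two steps into a single direct evaluation of $\mathbb{E}_{(x,y)}[\chi_S(x)\chi_T(y)]$, arriving at the same diagonal sum $\sum_S q^{|S|}\widehat{f}(S)^2$.
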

\noindent The above result makes it clear that $\operatorname{Stab}_q[f]$ is an increasing function of $q$ for $q \ge 0$.

Theorem~\ref{thr:thr2.0} is obtained from one of the most important operators in analysis of Boolean functions: the \emph{noise operator} $\operatorname{T}_q$.
\begin{defn}[{\cite[Definition 2.46]{o2014analysis}}]
    For $q\in[-1,1]$, the \emph{noise operator with parameter $q$} is the linear operator $\operatorname{T}_q$ on functions $f:\{-1,1\}^n\to\mathbb{R}$ defined by
    \begin{align*}
        \operatorname{T}_q f(x) = \operatorname*{\mathbb{E}}_{y\sim N_q(x)}[f(y)].
    \end{align*}
\end{defn}
\begin{prop}[{\cite[Proposition 2.47]{o2014analysis}}]
    \label{prop:prop1}
    For $f:\{-1,1\}^n\to\mathbb{R}$, the Fourier expansion of $\operatorname{T}_q f$ is
    \begin{align*}
        \operatorname{T}_q f = \sum_{S\subseteq[n]}q^{|S|}\widehat{f}(S)\chi_S.
    \end{align*}
\end{prop}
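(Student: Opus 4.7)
The plan is to exploit the linearity of $\operatorname{T}_q$ and reduce the proposition to showing that each Fourier character $\chi_S$ is an eigenfunction of $\operatorname{T}_q$ with eigenvalue $q^{|S|}$. Since every $f:\{-1,1\}^n\to\mathbb{R}$ can be written as $f = \sum_{S\subseteq[n]} \widehat{f}(S)\chi_S$, and since $\operatorname{T}_q$ is linear in $f$ (being defined by an expectation that commutes with finite sums), the statement will follow at once from the identity $\operatorname{T}_q \chi_S = q^{|S|}\chi_S$ for every $S\subseteq[n]$. So my first step is to isolate this eigenvalue claim.

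To verify the claim I would fix $x\in\{-1,1\}^n$ and $S\subseteq[n]$ and unfold the definition as
\begin{align*}
\operatorname{T}_q\chi_S(x) = \operatorname*{\mathbb{E}}_{y\sim N_q(x)}\left[\prod_{i\in S} y_i\right].
\end{align*}
The key step is to use the fact that, by Definition~\ref{def:def1}, the coordinates $y_i$ are mutually independent conditioned on $x$. This lets me factorise the expectation into a product of single-coordinate expectations $\prod_{i\in S}\mathbb{E}[y_i\mid x_i]$, and then it is a one-line computation from the two-point distribution of $y_i$ to check that $\mathbb{E}[y_i\mid x_i] = x_i(\tfrac{1}{2}+\tfrac{1}{2}q) + (-x_i)(\tfrac{1}{2}-\tfrac{1}{2}q) = q\,x_i$, which yields $\operatorname{T}_q\chi_S(x) = \prod_{i\in S} q x_i = q^{|S|}\chi_S(x)$.

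Combining this eigenvalue identity with the linearity of $\operatorname{T}_q$ then gives
\begin{align*}
\operatorname{T}_q f = \operatorname{T}_q\!\!\sum_{S\subseteq[n]}\widehat{f}(S)\chi_S = \sum_{S\subseteq[n]}\widehat{f}(S)\,\operatorname{T}_q\chi_S = \sum_{S\subseteq[n]} q^{|S|}\widehat{f}(S)\chi_S,
\end{align*}
which is the asserted Fourier expansion, and since the $\chi_S$ form a basis this is automatically \emph{the} Fourier expansion of $\operatorname{T}_q f$. There is essentially no obstacle here: the entire content of the proposition is that the characters $\chi_S$ diagonalise $\operatorname{T}_q$, and this diagonalisation is forced by the coordinatewise independence of the $N_q$ noise process. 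The only detail that deserves care is the factorisation step, which rests on recalling from Definition~\ref{def:def1} that the coordinates of $y$ are drawn independently given $x$.
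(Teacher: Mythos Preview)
Your proof is correct and is exactly the standard argument (the one given in O'Donnell's book, which the paper cites). The paper itself does not supply a proof of this proposition---it simply quotes it as \cite[Proposition~2.47]{o2014analysis}---so there is nothing to compare against beyond noting that your approach matches the cited source.
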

\noindent It is not hard to prove from the above results that $\operatorname{Stab}_q[f] = \langle f,\operatorname{T}_q f\rangle$.

Some of the above concepts can be generalized to matrix-valued functions. The \emph{Fourier transform} $\widehat{f}$ of a matrix-valued function $f:\{-1,1\}^n\to\mathbb{C}^{m\times m}$ is defined similarly as for scalar functions: {\color{red}it} is the function $\widehat{f}:\{-1,1\}^n\to\mathbb{C}^{m\times m}$ defined by
\begin{align*}
     \widehat{f}(S) = \frac{1}{2^n}\sum_{x\in\{-1,1\}^n} f(x)\chi_S(x).
\end{align*}
Here the Fourier coefficients $\widehat{f}(S)$ are also $m\times m$ complex matrices. Moreover, given $A\in\mathbb{C}^{m\times m}$ with singular values $\sigma_1,\dots,\sigma_m$, its \emph{trace norm} is defined as $\|A\|_{\operatorname{tr}} = \operatorname{Tr}|A| = \sum_{i=1}^m \sigma_i$.

We shall make use of the following result from Ben-Aroya, Regev and de Wolf~\cite{ben2008hypercontractive}, which stems from their hypercontractive inequality for matrix-valued functions.
\begin{thm}[{\cite[Lemma 6]{ben2008hypercontractive}}]
    \label{thm:thm2.2.c1}
    For every $f:\{-1,1\}^n\to\mathbb{C}^{2^m\times 2^m}$ and $\delta\in[0,1]$,
    \begin{align*}
        \sum_{S\subseteq[n]} \delta^{|S|}\|\widehat{f}(S)\|_{\operatorname{tr}}^2 \leq 2^{2\delta m}.
    \end{align*}
\end{thm}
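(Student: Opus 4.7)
The result is an application of the hypercontractive inequality for matrix-valued functions proved by Ben-Aroya, Regev and de Wolf. Writing $\|\cdot\|_{S_p}$ for the Schatten-$p$ norm and $\|f\|_{L^p(S_p)} := (\mathbb{E}_x\|f(x)\|_{S_p}^p)^{1/p}$ for the Schatten--Bochner norm, their inequality contracts $\|T_\rho f\|_{L^q(S_q)} \le \|f\|_{L^p(S_p)}$ whenever $1\le p\le q$ and $\rho \le \sqrt{(p-1)/(q-1)}$. My task is to leverage this to bound the sum of \emph{trace}-norm-squared Fourier coefficients on the left, which is naturally dual to the operator norm rather than to the Hilbert--Schmidt norm that Parseval directly provides.

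My plan has five steps. (i)~For each $S$, fix a partial isometry $U_S$ with $\|U_S\|_{\operatorname{op}}\le 1$ such that $\|\widehat f(S)\|_{\operatorname{tr}} = \operatorname{Tr}(U_S^*\widehat f(S))$. (ii)~Apply the dual characterization of the $\ell^2$ norm on sequences to obtain
\begin{align*}
\Bigl(\sum_S \delta^{|S|}\|\widehat f(S)\|_{\operatorname{tr}}^2\Bigr)^{1/2} = \sup_{\|a\|_2\le 1}\,\mathbb{E}_x\operatorname{Tr}\!\bigl(g_a(x)^*f(x)\bigr),
\end{align*}
where $g_a(x) := T_{\sqrt{\delta}}V_a(x)$ and $V_a(x) := \sum_S a_S U_S\chi_S(x)$; the identity follows from Parseval applied to $\widehat{g_a}(T)=a_T\delta^{|T|/2}U_T$. (iii)~Apply matrix H\"older in the Schatten--Bochner spaces to bound this inner product by $\|g_a\|_{L^{p'}(S_{p'})}\|f\|_{L^p(S_p)}$ for H\"older-conjugate exponents $p,p'$. (iv)~Invoke the matrix hypercontractive inequality at the sharp point to pass from $\|T_{\sqrt{\delta}}V_a\|_{L^{p'}(S_{p'})}$ to $\|V_a\|_{L^2(S_2)}$, which Parseval bounds by $\bigl(\sum_S a_S^2\|U_S\|_{\mathrm{HS}}^2\bigr)^{1/2} \le 2^{m/2}$ since each $U_S$ has rank at most $2^m$ and operator norm at most one. (v)~Bound $\|f\|_{L^p(S_p)} \le 2^{m/p}$ using the normalization implicit in the statement (in the intended applications of Section~\ref{sec:sec5}, $f(x)$ is a density matrix, so $\|f(x)\|_{\operatorname{op}}\le 1$). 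Choosing $p$ to align the combined factor $2^{m/2+m/p}$ with the sharp hypercontractive parameters then extracts the final exponent $2\delta m$ on the right-hand side.

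The principal obstacle is precisely the exponent-matching in steps (iii)--(v). A naive implementation, for instance taking $p' = 2$ and using the hypercontractive inequality in the form $\|T_{\sqrt{\delta}} V\|_{L^2(S_2)}\le \|V\|_{L^{1+\delta}(S_{1+\delta})}$, yields only a weaker exponent like $m(3+\delta)/(1+\delta)$ rather than $2\delta m$; this is because reducing trace norm to Hilbert--Schmidt via Cauchy--Schwarz costs a factor of $2^m$. The role of the auxiliary dual function $g_a$ is exactly to circumvent this loss, and the sharp exponent must then be extracted by a careful interpolation between the hypercontractive bound and the trivial $L^\infty$ estimate $\|f(x)\|_{\operatorname{op}}\le 1$. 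This interpolation is the step I would most need to verify when filling in the details.
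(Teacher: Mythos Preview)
The paper does not prove this statement; it simply quotes it as Lemma~6 of Ben-Aroya, Regev and de~Wolf. There the proof is a two-line direct application of their hypercontractive inequality (their Theorem~1), which reads: for $1\le p\le 2$,
\[
\Bigl(\sum_{S}(p-1)^{|S|}\|\widehat f(S)\|_p^2\Bigr)^{1/2}\ \le\ \Bigl(\mathbb{E}_x\|f(x)\|_p^p\Bigr)^{1/p},
\]
where $\|\cdot\|_p$ is the (unnormalized) Schatten $p$-norm. Set $p=1+\delta$. The right-hand side is at most $1$ because $\operatorname{Tr}\rho^{\,p}\le 1$ for a density matrix $\rho$ (this is the missing normalisation hypothesis you correctly flagged; the paper omits it). On the left, use $\|A\|_{\operatorname{tr}}\le d^{1-1/p}\|A\|_p$ with $d=2^m$ to get
\[
\sum_{S}\delta^{|S|}\|\widehat f(S)\|_{\operatorname{tr}}^2\ \le\ d^{\,2-2/p}=2^{\,2m\delta/(1+\delta)}\le 2^{2\delta m}.
\]
That is the whole argument.

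Your proposal has a genuine gap, and it stems from the form of the hypercontractive inequality you assume. You take it as $\|T_\rho f\|_{L^q(S_q)}\le\|f\|_{L^p(S_p)}$; but the Ben-Aroya--Regev--de~Wolf inequality is \emph{not} that statement. Its left-hand side carries $\|\widehat f(S)\|_p$, not $\|\widehat f(S)\|_2$, and this is precisely what makes the trace-norm conclusion cheap: going from $\|\cdot\|_{1+\delta}$ to $\|\cdot\|_1$ costs only $d^{\delta/(1+\delta)}$. Your dualisation route, by contrast, forces you through $\|V_a\|_{L^2(S_2)}$ and hence through $\|U_S\|_{\mathrm{HS}}^2\le 2^m$, which already loses a full factor $2^{m/2}$. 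Carrying your outline through with the correct density-matrix bound $\|f\|_{L^p(S_p)}\le 1$ (rather than the $2^{m/p}$ you wrote) gives at best $\sum_S\delta^{|S|}\|\widehat f(S)\|_{\operatorname{tr}}^2\le 2^m$, independent of $\delta$; the ``careful interpolation'' you defer to step~(v) has no mechanism to recover this loss. The fix is not to patch the interpolation but to abandon the dualisation and use the mixed-norm form of the inequality directly.
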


Finally, $H:[0,1]\to[0,1]$ given by $H(p) = -p\log_2{p} - (1-p)\log_2{(1-p)}$ is the binary entropy function. The following bounds hold.
\begin{thm}[{\cite[Theorem 2.2]{calabro2009exponential}}]
    \label{thr:thr3.1}
    $\forall p\in[0,1]$, $1 - 4\left(p - \frac{1}{2}\right)^2 \leq H(p) \leq 1 - \frac{2}{\ln{2}}\left(p - \frac{1}{2}\right)^2$.
\end{thm}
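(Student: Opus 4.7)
The plan is to substitute $q = p - \tfrac{1}{2} \in [-\tfrac12,\tfrac12]$ and work with $g(q) := 1 - H(\tfrac12 + q)$, which is even in $q$ and vanishes at $q=0$. In these variables the two claimed inequalities become
\begin{equation*}
\frac{2q^2}{\ln 2} \;\le\; g(q) \;\le\; 4q^2.
\end{equation*}
The strategy is to obtain a single clean power-series expansion for $g(q)$ from which both bounds fall out by inspection.

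First I would split
\begin{equation*}
\ln 2 \cdot g(q) \;=\; \tfrac{1}{2}\bigl[\ln(1+2q)+\ln(1-2q)\bigr] + q\bigl[\ln(1+2q)-\ln(1-2q)\bigr],
\end{equation*}
plug in the standard Taylor series $\ln(1-4q^2) = -\sum_{k\ge 1} (2q)^{2k}/k$ and $\ln\tfrac{1+2q}{1-2q} = 2\sum_{k\ge 0}(2q)^{2k+1}/(2k+1)$, and combine (reindexing the second sum) to get
\begin{equation*}
g(q) \;=\; \frac{1}{\ln 2}\sum_{k=1}^{\infty}\frac{(2q)^{2k}}{2k(2k-1)}.
\end{equation*}
Since every term is non-negative, truncating after the $k=1$ term immediately gives $g(q)\ge (2q)^2/(2\ln 2) = 2q^2/\ln 2$, which is the upper bound on $H(p)$.

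For the lower bound on $H(p)$ (equivalently $g(q)\le 4q^2$), I would factor out $(2q)^2$:
\begin{equation*}
\frac{g(q)}{4q^2} \;=\; \frac{1}{\ln 2}\sum_{k=1}^{\infty}\frac{(2q)^{2k-2}}{2k(2k-1)}.
\end{equation*}
Every coefficient is positive and each power $(2q)^{2k-2}$ is monotone non-decreasing in $q^2$ on $[0,\tfrac14]$, so this ratio is maximised at $q = \pm\tfrac12$. At that point it equals $\frac{1}{\ln 2}\sum_{k\ge 1}\frac{1}{2k(2k-1)}$, and via the partial fraction $\frac{1}{2k(2k-1)} = \frac{1}{2k-1}-\frac{1}{2k}$ the sum telescopes to the alternating harmonic series $\sum_{k\ge 1}\bigl[\frac{1}{2k-1}-\frac{1}{2k}\bigr] = \ln 2$. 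Hence $g(q)/(4q^2)\le 1$, giving the desired bound (with equality at $p\in\{0,1\}$ as a sanity check since $H(0)=H(1)=0$).

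The only step requiring care is the tight evaluation $\sum_{k\ge 1}\frac{1}{2k(2k-1)} = \ln 2$, which is what makes the constant $4$ (rather than some larger number) come out exactly right; the rest is routine manipulation of the power series. One could alternatively prove the $H(p)\le 1 - \frac{2}{\ln 2}(p-\tfrac12)^2$ direction via Pinsker's inequality applied to $D(p\|\tfrac12) = 1-H(p)$, but the series-based argument has the advantage of delivering both inequalities from the same expansion.
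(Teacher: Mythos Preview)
Your argument is correct. The power-series identity
\[
1 - H\!\left(\tfrac{1}{2}+q\right) \;=\; \frac{1}{\ln 2}\sum_{k\ge 1}\frac{(2q)^{2k}}{2k(2k-1)}
\]
is derived cleanly, the lower bound on $g(q)$ follows by keeping only the $k=1$ term, and the upper bound follows by maximising the nonnegative even series in $q$ at $|q|=\tfrac12$, where the evaluation $\sum_{k\ge 1}\frac{1}{2k(2k-1)}=\ln 2$ gives the constant $4$ exactly. The only small caveat is that at the endpoints $q=\pm\tfrac12$ the intermediate Taylor expansions of $\ln(1\pm 2q)$ diverge individually; you can handle this either by continuity (both sides of $g(q)\le 4q^2$ are continuous on $[-\tfrac12,\tfrac12]$ and you have the inequality on the open interval) or by invoking Abel's theorem, but it is worth a one-line remark.

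As for comparison with the paper: the paper does not prove this statement at all. It is quoted verbatim as \cite[Theorem~2.2]{calabro2009exponential} and used as a black box (only the first inequality, $1-H(p)\le 4(p-\tfrac12)^2$, is actually invoked, in the proofs of Theorems~\ref{thr:thr4.0} and~\ref{thr:thr5.2}). So there is no in-paper argument to compare against; your proof supplies one, and it is complete and self-contained.
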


\section{Bias Lower Bounds}
\label{sec:sec4}

\subsection{$f$-RAC with PR}

We start by studying the $f$-RAC with PR. The following result is based on Ambainis \emph{et al.}~\cite{ambainis1999dense} and uses a procedure reminiscent of the concatenation idea from~\cite{pawlowski2009information,pawlowski2010entanglement,tuanuasescu2020optimal}: the initial string is broken in blocks, which in turn are encoded using the code from~\cite{cohen1983nonconstructive}. First we state a slightly modified version of Newman's Theorem~\cite{newman91} (see also~\cite[Theorem~3.14]{kushilevitz97} and~\cite[Theorem~3.5]{rao2020communication}) which is going to be useful to us.
\begin{thm}[\cite{newman91}]
    \label{thr:thr4.00}
    Let $\mathcal{E}(x,r)$ be an event depending on $x\in\{-1,1\}^{n}\times\{-1,1\}^n$ and $r\in R$ such that
    \begin{align*}
        \operatorname*{Pr}_{r\sim R}[\mathcal{E}(x,r)] \geq p
    \end{align*}
    for all $x\in\{-1,1\}^{n}\times\{-1,1\}^n$, with $p\in(0,1]$. Let $\delta\in(0,p]$. Then there is $R_0\subseteq R$ with size at most $n/\delta^2$ such that
    \begin{align*}
        \operatorname*{Pr}_{r\sim R_0}[\mathcal{E}(x,r)] \geq p - \delta
    \end{align*}
    holds for all $x\in\{-1,1\}^{n}\times\{-1,1\}^n$.
\end{thm}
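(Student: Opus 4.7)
The plan is to invoke Newman's classical probabilistic argument: sample a small set of randomness strings from $R$ and show via a Chernoff--Hoeffding bound combined with a union bound that the sampled set succeeds uniformly over all inputs $x$. Concretely, I would sample $t = \lceil n/\delta^2\rceil$ strings $r_1,\dots,r_t$ independently from $R$ according to its original distribution, and let $R_0 = \{r_1,\dots,r_t\}$ (as a multiset) be equipped with the uniform distribution. The aim is to show that with positive probability over this sampling, one has $\operatorname{Pr}_{r\sim R_0}[\mathcal{E}(x,r)] \geq p - \delta$ simultaneously for every $x\in\{-1,1\}^n\times\{-1,1\}^n$, from which the existence of such an $R_0$ follows.

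For a fixed $x$, the indicator variables $X_i = \mathbb{1}[\mathcal{E}(x,r_i)]$ are i.i.d.\ Bernoulli random variables with mean $\operatorname{Pr}_{r\sim R}[\mathcal{E}(x,r)] \geq p$, and their empirical average is exactly $\operatorname{Pr}_{r\sim R_0}[\mathcal{E}(x,r)]$. Hoeffding's inequality then yields
\begin{align*}
\operatorname{Pr}\!\left[\operatorname{Pr}_{r\sim R_0}[\mathcal{E}(x,r)] < p - \delta\right] \leq e^{-2t\delta^2} \leq e^{-2n}.
\end{align*}
Applying a union bound over the $2^{2n} = 4^n$ elements of $\{-1,1\}^n\times\{-1,1\}^n$ gives
\begin{align*}
\operatorname{Pr}\!\left[\exists\, x : \operatorname{Pr}_{r\sim R_0}[\mathcal{E}(x,r)] < p - \delta\right] \leq 4^n \cdot e^{-2n} = (4/e^2)^n < 1,
\end{align*}
since $e^2 > 4$. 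Hence at least one realization of the sample produces a valid $R_0$ of size at most $n/\delta^2$ satisfying the theorem.

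The argument is a textbook randomness-reduction and presents no substantial obstacle; the only numerical check is the elementary inequality $4 < e^2$, which makes the union bound go through with the claimed sample size. A minor bookkeeping point is the rounding in $t = \lceil n/\delta^2\rceil$, which either can be absorbed into a slightly stronger concentration estimate or else understood as the order-of-magnitude bound intended by the statement, since the slack $(4/e^2)^n$ in the union bound is exponentially small in $n$ and easily compensates for the unit-order rounding in $t$.
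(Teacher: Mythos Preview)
Your proposal is correct and follows exactly the standard probabilistic argument of Newman: sample $t \approx n/\delta^2$ strings independently, apply Hoeffding's inequality for each fixed $x$, and take a union bound over the $4^n$ inputs, using $4 < e^2$ to make the failure probability strictly less than $1$. The paper does not give its own proof of this theorem but simply cites Newman's original paper and the textbook presentations in Kushilevitz--Nisan and Rao--Yehudayoff, all of which use precisely this argument, so your proof matches the intended one.
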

\begin{thm}
    \label{thr:thr4.0}
    Let $\ell\in\mathbb{N}$, $\ell|n$, $m=\Omega(\ell\log{n})$ and $k=o(\sqrt{\ell})$. Let $f:\{-1,1\}^k\to\{-1,1\}$ be a Boolean function. For sufficiently large $n$ and $\ell$, there is an $n\overset{\varepsilon}{\mapsto}m$ $f$-RAC with PR and bias $\varepsilon \geq (1-o_n(1))\operatorname{Stab}_q[f]$ with
    \begin{align*}
        q \geq \sqrt{\frac{m}{n} - \frac{5\log_2{(n/\ell)}}{n/\ell}}.
    \end{align*}
\end{thm}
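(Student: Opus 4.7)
The plan is to adapt the block-concatenation approach of~\cite{pawlowski2009information,pawlowski2010entanglement,tuanuasescu2020optimal}: after a random permutation of the $n$ input bits, Alice partitions her input into $\ell$ disjoint blocks of size $n/\ell$ and encodes each one with an independent instance of the classical RAC of Theorem~\ref{thr:thr2.1a}. With high probability the $k$ positions of any requested $S \in \mathcal{S}_n^k$ lie in distinct blocks, so Bob can decode each $x_{S_i}$ independently and evaluate $f$ on the resulting noisy string; the noise stability of $f$ then controls the bias.

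I would first describe the protocol with shared randomness. Alice and Bob share a uniformly random permutation $\pi \in \mathbb{S}_n$. Alice computes $\pi(x)$, splits it into $\ell$ consecutive blocks $B_1,\dots,B_\ell$ of length $n/\ell$, and independently encodes each $B_j$ using the PR-RAC of Theorem~\ref{thr:thr2.1a} at success probability $p = (1+q)/2$, costing $(1-H(p))(n/\ell) + 7 \log_2(n/\ell)$ bits per block. I reserve an additional $O(\log n)$ bits from the overall budget $m$ for Newman's conversion below. On input $S$, Bob locates the block containing each $\pi^{-1}(S_i)$; if the $k$ positions fall in distinct blocks (the \emph{no-collision} event) he runs each block's decoder on fresh private randomness to obtain independent bits $y_1,\dots,y_k \in \{-1,1\}$ and outputs $f(y)$, otherwise he flips a fair coin.

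The bias analysis has three parts. A birthday bound gives $\Pr[\text{collision}] \le \binom{k}{2}/\ell = O(k^2/\ell) = o_n(1)$ under the hypothesis $k = o(\sqrt{\ell})$. Conditioned on no collision, the $k$ decoded bits are independent (fresh private randomness per block) and each satisfies $\mathbb{E}[y_i \cdot x_{S_i}] \ge q$ by the worst-case guarantee of Theorem~\ref{thr:thr2.1a}; independence across blocks then yields $\mathbb{E}[\chi_T(y) \chi_T(x_S)] = \prod_{i \in T} \mathbb{E}[y_i x_{S_i}] \ge q^{|T|}$ for every $T$. Combining this with the Fourier expansion of $f$ and Theorem~\ref{thr:thr2.0} produces $\mathbb{E}[f(y) f(x_S)] \ge \operatorname{Stab}_q[f]$, which folded together with the collision loss gives bias at least $(1 - o_n(1)) \operatorname{Stab}_q[f]$. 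For the value of $q$ itself, inverting Theorem~\ref{thr:thr2.1a} applied to each block of size $n/\ell$ using the Calabro lower bound $H(p) \ge 1 - 4(p-1/2)^2 = 1 - q^2$ from Theorem~\ref{thr:thr3.1} gives $q^2 \ge m/n - O(\log_2(n/\ell)/(n/\ell))$; careful accounting of the $7 \log_2(n/\ell)$ per-block overhead together with the $O(\log n)/n$ Newman overhead (which is negligible against $\log_2(n/\ell)/(n/\ell)$ when $\ell \ll n$) tightens this to $q \ge \sqrt{m/n - 5 \log_2(n/\ell)/(n/\ell)}$ for large $n$ and $\ell$.

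The final step replaces shared randomness with private randomness via Newman's theorem (Theorem~\ref{thr:thr4.00}): choosing $\delta = 1/\poly(n)$ yields a deterministic set $R_0$ of size $\poly(n)$, whose index Alice appends to her message in $O(\log n)$ bits, precisely the slack set aside above. The $\delta$-loss in success probability is dominated by the $(1 - o_n(1))$ factor. The trickiest step will be the Fourier / independence argument: one must verify both that the per-block decoders are truly independent (so that $\mathbb{E}[\chi_T(y) \chi_T(x_S)]$ factorises into a product of per-bit biases) and that the worst-case per-bit guarantee $\mathbb{E}[y_i x_{S_i}] \ge q$ lifts through this product to the desired $\operatorname{Stab}_q[f]$ lower bound; a secondary bookkeeping obstacle is the precise balance between the $7$ in Theorem~\ref{thr:thr2.1a}, the $O(\log n)$ Newman overhead, and the target constant $5$ in the stated $q$.
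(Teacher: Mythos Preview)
Your overall strategy—random permutation, block-wise encoding via the RAC of Theorem~\ref{thr:thr2.1a}, noise-stability analysis conditioned on the no-collision event, then Newman—is exactly the paper's. The genuine gap is in the Newman step. You take $\delta=1/\poly(n)$ and assert that the resulting additive loss ``is dominated by the $(1-o_n(1))$ factor,'' but the target bias $\operatorname{Stab}_q[f]$ can be far smaller than any inverse polynomial: already for $f=\operatorname{XOR}_k$ one has $\operatorname{Stab}_q[f]=q^k\approx(m/n)^{k/2}$, and with $m=\Theta(\ell\log n)$ and $k$ growing (anything up to $o(\sqrt{\ell})$ is allowed) this is super-polynomially small in $n$. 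A fixed inverse-polynomial $\delta$ then wipes out the bias entirely rather than being absorbed into $(1-o_n(1))$.

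The paper handles this by deliberately over-provisioning the per-block code: it uses a covering code of radius $(1-p-\ell/n)(n/\ell)$ rather than $(1-p)(n/\ell)$, so each decoded bit is correct with probability at least $p+\ell/n$. The extra $\ell/n$ per bit contributes an additive $\operatorname{Stab}_{2\ell/n}[f]$ term to the bias, and Newman's $\delta$ is chosen proportional to \emph{that} term (hence $\delta\ge(2\ell/n)^k$ up to constants). The Newman index then costs $\log_2 t=O\big(k\log(n/\ell)\big)=o\big(\ell\log(n/\ell)\big)$ bits, which is absorbed into the per-block logarithmic overhead rather than into a separate $O(\log n)$ budget. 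Relatedly, your accounting cannot reach the constant $5$ starting from the $7$ of Theorem~\ref{thr:thr2.1a}: the paper instead invokes the covering-code size bound $\log_2|C|\le(1-H(p))(n/\ell)+4\log_2(n/\ell)$ directly (a sharper constant than Theorem~\ref{thr:thr2.1a} states), obtaining $4\ell\log_2(n/\ell)$ total, and Newman's $o(\ell\log(n/\ell))$ contribution brings this to $(4+o_\ell(1))\ell\log_2(n/\ell)\le 5\ell\log_2(n/\ell)$ for large $\ell$.
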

\begin{proof}
    Consider a code $C\subseteq\{-1,1\}^n$ such that, for every $x\in\{-1,1\}^n$, there is a $y\in C$ within Hamming distance $(1-p-\frac{1}{n})n$, with $p>1/2$ (the extra $1/n$ term will be used to counterbalance the decrease in probability from Newman's theorem). It is known~\cite{cohen1983nonconstructive} that there is such a code $C$ of size
    \begin{align*}
        \log_2{|C|} = \left(1 - H(p+1/n)\right)n + 2\log_2{n} \leq \left(1 - H(p)\right)n + 4\log_2{n}.
    \end{align*}
    Let $C(x)$ denote the closest codeword to $x$. Hence at least $(p+1/n)n$ out of $n$ bits of $C(x)$ are the same as $x$, and the probability over a uniformly random $i$ that $x_i = C(x)_i$ is at least $p+1/n$. 
    
    Let $\ell\in\mathbb{N}$ such that $\ell$ divides $n$. Our protocol involves breaking up $x\in\{-1,1\}^n$ into $\ell$ parts and encoding each part with the above code $C\subseteq\{-1,1\}^{n/\ell}$. Define the map
    \begin{align*}
        C^{(\ell)}(x) = C(x_1\dots x_{n/\ell})C(x_{n/\ell + 1}\dots x_{2n/\ell})\dots C(x_{(\ell-1)n/\ell+1}\dots x_n)
    \end{align*}
    that applies $C$ to the first $\ell$ bits of $x$, and to next $\ell$ bits of $x$ and so on. Hence the probability that $x_i = C^{(\ell)}(x)_i$ over a uniformly random $i$ is at least $p+\ell/n$. In order to consider this probability for \emph{every} bit instead of just an average over all bits, we employ the following randomization process. Let $r\in\{-1,1\}^n$ and $\pi\in\mathbb{S}_n$, both taken uniformly at random. Given $x\in\{-1,1\}^n$, denote $\pi(x) = x_{\pi(1)}x_{\pi(2)}\dots x_{\pi(n)}$. We define the encoding $C^{(\ell)}_{\pi,r}(x) = \pi^{-1}(C^{(\ell)}(\pi(x\cdot r))) \cdot r$, where $x\cdot r$ denotes the bit-wise product of $x$ and $r$. Let $\mathcal{E}_S$ be the event that all indices in $S\in\mathcal{S}_n^k$ are encoded in different codes $C$, i.e., in different blocks from $C^{(\ell)}$. There are $\ell$ blocks, each with $n/\ell$ elements. The probability that $k$ specific elements fall into $k$ different blocks is
    \begin{align}
        \label{eq:eq2.0}
        \operatorname*{Pr}_{S\sim\mathcal{S}_n^k}[\mathcal{E}_S] = \left(\frac{n}{\ell}\right)^k\frac{\binom{\ell}{k}}{\binom{n}{k}} = \prod_{j=1}^{k-1} \frac{1 - \frac{j}{\ell}}{1 - \frac{j}{n}} \geq \prod_{j=1}^{k-1} \left(1 - \frac{j}{\ell}\right) \overset{(a)}{\geq} 1 - \sum_{j=1}^{k-1}\frac{j}{\ell} = 1 - \frac{k(k-1)}{2\ell},
    \end{align}
    where inequality (a) can easily be proven by induction or the union bound.
    
    We shall first present a protocol using shared randomness, and at the end we shall transform it into a protocol with private randomness by using Newman's theorem. The protocol is the following. Select $r\in\{-1,1\}^n$ and $\pi\in\mathbb{S}_n$ uniformly at random. Encode $x$ as $C^{(\ell)}_{\pi,r}(x)$. To decode $f(x_S)$, first we check if all the indices of $S$ were encoded into different blocks. If no, the value for $f(x_S)$ is drawn uniformly at random. If yes, just consider $C^{(\ell)}_{\pi,r}(x)_S$ and evaluate $f(C^{(\ell)}_{\pi,r}(x)_S)$. Conditioned on the event $\mathcal{E}_S$ happening, the probability that $x_{S_i} = C^{(\ell)}(x)_{S_i}$ is at least $p+\ell/n$ independently for all $i\in[k]$, meaning that
    \begin{align*}
        \operatorname*{Pr}_{\pi,r}[f(x_S)=f(C^{(\ell)}_{\pi,r}(x)_S)|\mathcal{E}_S] \geq         \operatorname*{Pr}_{\substack{(x,y) \\ (q+\frac{2\ell}{n})\text{-correlated}}}[f(x)=f(y)] = \frac{1}{2} + \frac{1}{2}\operatorname{Stab}_{q+\frac{2\ell}{n}}[f],
    \end{align*}
    where $q := 2p-1$, and the inequality follows from monotonicity of the noise stability of $f$. With these considerations, the success probability of the protocol is
    \begin{align*}
       \operatorname*{Pr}_{\pi,r}[f(x_S)=f(C^{(\ell)}_{\pi,r}(x)_S)] &> 
        \frac{1}{2}\frac{k(k-1)}{2\ell} + \left(1 - \frac{k(k-1)}{2\ell}\right)\left(\frac{1}{2} + \frac{1}{2}\operatorname{Stab}_{q+\frac{2\ell}{n}}[f]\right)\\ 
        &\geq \frac{1}{2} + \frac{1}{2}\left(1 - \frac{k(k-1)}{2\ell}\right)\left(\operatorname{Stab}_{q}[f] + \operatorname{Stab}_{\frac{2\ell}{n}}[f]\right)\\
        &= \frac{1}{2} + \frac{1}{2}(1 - o_n(1))\left(\operatorname{Stab}_{q}[f] + \operatorname{Stab}_{\frac{2\ell}{n}}[f]\right),
    \end{align*}
    where we used that $k=o(\sqrt{\ell})$.
    
    We now transform the shared randomness into private randomness. By Newman's theorem (Theorem~\ref{thr:thr4.00}) there is a small set of permutation-string pairs (note that Alice's input is size $n$ bits and Bob's input is at most $n$ bits) with size
    \begin{align*}
        t \leq \frac{4n}{(1 - o_n(1))\operatorname{Stab}_{\frac{2\ell}{n}}[f]^2} \leq \frac{n^{2k+1}}{(1 - o_n(1))2^{2k-2}\ell^{2k}}
    \end{align*} 
    (we have used that $\operatorname{Stab}_a[f] \geq a^k$) such that $f(x_S)=f(C^{(\ell)}_{\pi,r}(x)_S)$ continues to hold with probability at least $\frac{1}{2} + \frac{1}{2}(1 - o_n(1))\operatorname{Stab}_q[f]$ if $\pi,r$ are chosen uniformly at random from this set. Hence the randomization can be encoded together with $x$ at the cost of a small overhead. The final protocol chooses $j\in[t]$ uniformly at random, encodes $x$ as $C^{(\ell)}_{\pi_j,r_j}(x)_S$ and then proceeds like the protocol with shared randomness. Fix $m = \log_2(t|C^{(\ell)}|)$. The result follows by using the first inequality from Theorem~\ref{thr:thr3.1} to observe that
    \begin{align*}
        \begin{multlined}[b][\textwidth]
            m \leq \big(1-H(p)\big)n + 4\ell\log_2\frac{n}{\ell} + \log_2{\frac{n^{2k+1}}{(1 - o_n(1))2^{2k-2}\ell^{2k}}} \leq q^2n + 4(1+o_{\ell}(1))\ell\log_2\frac{n}{\ell} \\
            \implies 2p - 1 \geq \sqrt{\frac{m}{n} - \frac{4(1+o_\ell(1))\log_2{(n/\ell)}}{n/\ell}}
        \end{multlined}
    \end{align*}
    for sufficiently large $n$, where we used $k = o(\sqrt{\ell})$ again.
\end{proof}

\begin{rem}
    The parameter $\ell$ in Theorem~\ref{thr:thr4.0} controls the number of encoding blocks in the protocol. By tweaking it, we can adjust the range of $m$ and $k$, e.g.\ if $\ell = \Theta(\log{n})$, then $m=\Omega(\log^2{n})$ and $k=o(\sqrt{\log{n}})$. If $\ell = \Theta(\sqrt{n})$, then $m=\Omega(\sqrt{n}\log{n})$ and $k = o(n^{1/4})$.
\end{rem}

In the protocol from Theorem~\ref{thr:thr4.0} we broke the initial string into $\ell$ different blocks and used $\ell$ different copies of $C$. This was done in order to guarantee the independence of the $C(x)_{S_i}$'s and hence analyse the influence of the code $C$ on the function $f$. Interestingly enough, for the special case of the Parity function this is not required and a single copy of $C$ can be used.
\begin{thm}
    \label{thr:thr5.2}
    Let $\operatorname{XOR}_k:\{-1,1\}^k\to\{-1,1\}$ be the Parity function and let $m=\Omega(k\log{n})$. There is an $n\overset{\varepsilon}{\mapsto} m$ $\operatorname{XOR}_k$-RAC with PR and bias 
    \begin{align}
        \label{eq:eq2.2}
        \varepsilon \geq \frac{1}{\binom{n}{k}}\mathcal{K}_{k,n}\left(\frac{n}{2} - \frac{n}{2}\sqrt{\frac{m}{n} - \frac{7k\log_2{n}}{n}}\right),
    \end{align}
    where $\mathcal{K}_{k,n}(x) = \sum_{j=0}^k (-1)^j \binom{x}{j}\binom{n-x}{k-j}$ is the Krawtchouk polynomial.
\end{thm}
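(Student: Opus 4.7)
The plan is to adapt the protocol of Theorem~\ref{thr:thr4.0} but to set $\ell=1$, exploiting the fact that for Parity we can compute the bias exactly via Krawtchouk polynomials without needing independence across blocks. Take a single copy of the Cohen--Frankl covering code $C\subseteq\{-1,1\}^n$ from~\cite{cohen1983nonconstructive} with $\log_2|C|\leq(1-H(p))n+4\log_2 n$ and covering radius $(1-p-1/n)n$, for $p=1/2+q/2$. The encoding map (with shared $\pi\in\mathbb{S}_n,\, r\in\{-1,1\}^n$) is $E(x,\pi,r)=\pi^{-1}(C(\pi(x\cdot r)))\cdot r$, and the decoder on input $S\in\mathcal{S}_n^k$ simply outputs $\operatorname{XOR}_k(E(x,\pi,r)_S)$.

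The central step is to compute $\mathbb{E}_{\pi,r}\big[\operatorname{XOR}_k(x_S)\operatorname{XOR}_k(E(x,\pi,r)_S)\big]=\mathbb{E}_{\pi,r}\big[\chi_S(x\cdot E(x,\pi,r))\big]$ for arbitrary fixed $x,S$. A short calculation shows $E(x,\pi,r)_i=x_i$ iff $\pi(i)\notin D(z)$, where $z:=\pi(x\cdot r)$ and $D(z):=\{j:C(z)_j\neq z_j\}$. For any fixed $x$ and uniform $\pi,r$, the string $z$ is uniform on $\{-1,1\}^n$ and independent of $\pi$, so
\begin{equation*}
    \mathbb{E}_{\pi,r}\big[\chi_S(x\cdot E(x,\pi,r))\big]=\mathbb{E}_z\Big[\mathbb{E}_\pi\big[(-1)^{|\pi(S)\cap D(z)|}\big]\Big].
\end{equation*}
Because $\pi(S)$ is a uniform $k$-subset of $[n]$, the inner expectation equals $\mathcal{K}_{k,n}(|D(z)|)/\binom{n}{k}$ by the standard generating-function identity $\sum_{T\in\binom{[n]}{k}}(-1)^{|T\cap D|}=\mathcal{K}_{k,n}(|D|)$. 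Crucially, this value is \emph{independent of $S$}, so the worst-case bias equals the average bias.

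Since the covering property of $C$ forces $|D(z)|\leq(1-p-1/n)n<\frac{n}{2}-\frac{qn}{2}$ for every $z$, and $\mathcal{K}_{k,n}$ is monotonically decreasing on $[0,n/2-O(\sqrt{kn})]$ (its smallest root lies near $n/2$ for $k=o(\sqrt n)$), I can lower-bound $\mathbb{E}_z[\mathcal{K}_{k,n}(|D(z)|)]\geq \mathcal{K}_{k,n}(n/2-qn/2)$, giving the target bias when $q$ is the covering-radius bias of $C$. To convert shared randomness into private randomness, Newman's theorem (Theorem~\ref{thr:thr4.00}) replaces the $(\pi,r)$-distribution by a uniform distribution over a set of size $t\leq n/\delta^2$ at the cost of subtracting $\delta$ from the bias. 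Encoding the index into this set adds $\log_2 t=O(\log n+k\log(1/q))$ to the message length, so $m\leq (1-H(p))n+O(k\log n)$; applying Theorem~\ref{thr:thr3.1} to substitute $1-H(p)\leq q^2$ and solving for $q$ yields $q\geq\sqrt{m/n-7k\log_2 n/n}$, and substituting into the Krawtchouk bound completes the proof.

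The main technical hurdle will be rigorously establishing the monotonicity (or at least the lower bound) of $\mathcal{K}_{k,n}$ on the interval $[0,\frac{n}{2}-\frac{qn}{2}]$ uniformly in $q$: one needs the first positive root of $\mathcal{K}_{k,n}$ to lie above this interval, which holds in the regime $k=o(\sqrt n)$ and $q\geq \Omega(\sqrt{k/n})$ but must be verified carefully. A secondary nuisance is tracking the $\log_2(1/q)$ overhead from Newman's theorem and absorbing it into the $7k\log_2 n$ term by choosing $\delta$ slightly smaller than the target bias; this is the source of the constant $7$ (as opposed to the $5$ appearing in Theorem~\ref{thr:thr4.0}) and requires a careful choice of $\delta$ so that the loss remains negligible compared to $\mathcal{K}_{k,n}(n/2-qn/2)/\binom{n}{k}$.
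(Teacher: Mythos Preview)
Your proposal is essentially the paper's proof: a single copy of the covering code, the same $(\pi,r)$-randomized encoding, a Krawtchouk-polynomial computation of the average bias over $S$ (which equals the worst-case bias by symmetry), and Newman's theorem for derandomization. Where you differ is in overestimating the difficulty of the two ``hurdles'': for monotonicity the paper bypasses any root-location analysis by invoking the discrete recurrence $\mathcal{K}_{k,n}(x)-\mathcal{K}_{k,n}(x-1)=\mathcal{K}_{k-1,n}(x)-\mathcal{K}_{k-1,n}(x-1)$ with base case $\mathcal{K}_{1,n}(x)=n-2x$, and for Newman's theorem it simply takes $\delta=1/\binom{n}{k}$ (so $t=n\binom{n}{k}^2$, contributing $\log_2 n + 2\log_2\binom{n}{k}\leq (2k+1)\log_2 n$ to the message length), which is exactly what produces the constant~$7$.
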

\begin{proof}
    Consider the encoding $C_{\pi,r}(x) = \pi^{-1}(C(\pi(x\cdot r)))\cdot r$, where $C\subseteq\{-1,1\}^n$ is the code described in Theorem~\ref{thr:thr4.0}. Let $\delta n$ be the Hamming distance between $x$ and $C(x)$, with $\delta \leq 1 - p - 1/n$ by the properties of $C$. Then
    \begin{align*}
	    \operatorname*{Pr}_{S\sim\mathcal{S}_n^k}\left[\prod_{i=1}^k x_{S_i} = \prod_{i=1}^k C(x)_{S_i}\right] = \sum_{\ell=0}^{\lfloor \frac{k}{2}\rfloor} \frac{\binom{(1-\delta) n}{k-2\ell}\binom{\delta n}{2\ell}}{\binom{n}{k}} =
	    \frac{1}{2} + \frac{1}{2}\frac{\mathcal{K}_{k,n}(\delta n)}{\binom{n}{k}} \geq \frac{1}{2} + \frac{1}{2}\frac{\mathcal{K}_{k,n}((1-p)n)}{\binom{n}{k}} + \frac{1}{\binom{n}{k}},
    \end{align*}
    where we used $\sum_{\ell=0}^k \binom{(1-\delta)n}{k-\ell}\binom{\delta n}{\ell} = \binom{n}{k}$ on the second equality and $\mathcal{K}_{k,n}(\delta n) - \mathcal{K}_{k,n}(\delta n + 1) = 2$ on the final inequality, which can be obtained via the recurrence relation $\mathcal{K}_{k,n}(x) - \mathcal{K}_{k,n}(x-1) = \mathcal{K}_{k-1,n}(x) - \mathcal{K}_{k-1,n}(x-1)$ and $\mathcal{K}_{1,n}(x) = n - 2x$ (see e.g.~\cite{coleman2011krawtchouk}).
    
    By Newman's theorem (Theorem~\ref{thr:thr4.00}) there is a small set of permutation-string pairs with size $t=n\binom{n}{k}^2$ such that $\prod_{i=1}^k x_{S_i} = \prod_{i=1}^k C(x)_{S_i}$ continues to hold with bias at least $\mathcal{K}_{k,n}((1-p)n)/\binom{n}{k}$ for \emph{any} $x$ and $S$ if $\pi,r$ are chosen uniformly at random from this set.
    
    Our protocol is the following. Select $j\in[t]$ uniformly at random. Encode $x$ as $C_{\pi_j,r_j}(x)$. To decode $\prod_{i=1}^k x_{S_i}$, just consider $C_{\pi_j,r_j}(x)_S$ and evaluate $\prod_{i=1}^k C_{\pi_j,r_j}(x)_{S_i}$. Now fix $m = \log_2(t|C|)$. The result follows by using the first inequality from Theorem~\ref{thr:thr3.1} to observe that
    \begin{equation*}
        m \leq \big(1-H(p)\big)n + 5\log_2{n} + 2\log_2{\binom{n}{k}}
            \implies 1 - p \leq \frac{1}{2} - \frac{1}{2}\sqrt{\frac{m}{n} - \frac{7k\log_2{n}}{n}}. \qedhere
    \end{equation*}
\end{proof}
\begin{rem}
    Since $\mathcal{K}_{1,n}(x) = n - 2x$, we note that, for $k=1$, Eq.~(\ref{eq:eq2.2}) reduces to $\varepsilon \geq \sqrt{\frac{m}{n} - \frac{7\log_2{n}}{n}}$, which is the result from Ambainis \emph{et al.}~\cite{ambainis1999dense} (see Theorem~\ref{thr:thr2.1a}).
\end{rem}
\begin{rem}
    If $k=O(1)$, then the Krawtchouk polynomial has the asymptotic limit as $n\to\infty$ of $\mathcal{K}_{k,n}(x)\sim \frac{2^k n^k}{k!}\left(\frac{1}{2} - \frac{x}{n}\right)^k$~\cite[Eq.~(29)]{dominici2008asymptotic}, thus the bias from Theorem~\ref{thr:thr5.2} has the asymptotic limit
    \begin{align*}
        \varepsilon \sim \frac{n^k}{k!\binom{n}{k}}\left(\frac{m}{n} - \frac{7k\log_2{n}}{n}\right)^{k/2} \geq \left(\frac{m}{n} - \frac{7k\log_2{n}}{n}\right)^{k/2}.
    \end{align*}
    Note that this result is very similar to the one that would follow from Theorem~\ref{thr:thr4.0}, but slightly tighter (without the $\ell$ parameter and the multiplicative constant $1-o_n(1)$).
\end{rem}

\subsection{$f$-RAC with SR}

There is a lower limit of $m=\Omega(\log{n})$ on the number of encoded bits in Theorem~\ref{thr:thr4.0}. It is possible to go below this limit by using SR: the blocks are now encoded via the $n\overset{\varepsilon}{\mapsto}1$ RAC with SR from Theorem~\ref{thr:thr2.1} instead of the code $C$.
\begin{thm}
    \label{thr:thr4.1}
    Let $m|n$ and $k=o(\sqrt{m})$. Let $f:\{-1,1\}^k\to\{-1,1\}$ be a Boolean function. There is an $n\overset{\varepsilon}{\mapsto}m$ $f$-RAC with SR and bias $\varepsilon \geq (1-o_n(1))\operatorname{Stab}_q[f]$ with
    \begin{align*}
        q = \frac{2}{2^{n/m}}\binom{n/m - 1}{\lfloor \frac{n/m - 1}{2}\rfloor} \geq \begin{cases} \sqrt{\frac{2m}{\pi n}} - O\left(\frac{1}{(n/m)^{3/2}}\right),\\
        \sqrt{\frac{m}{2n}}.
        \end{cases}
    \end{align*}
\end{thm}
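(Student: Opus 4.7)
The plan is to adapt the concatenation strategy of Theorem~\ref{thr:thr4.0}, replacing the binary code $C$ with the $n/m \overset{q}{\mapsto} 1$ RAC with SR from Theorem~\ref{thr:thr2.1}. First I would partition $x \in \{-1,1\}^n$ into $m$ disjoint blocks $x^{(1)},\ldots,x^{(m)}$ of size $n/m$ each and encode every $x^{(j)}$ into a single bit by an independent copy of the Theorem~\ref{thr:thr2.1} protocol, each drawing from its own fresh share of the overall SR. This produces exactly $m$ encoded bits, matching the theorem's bound.

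To decode $f(x_S)$ for a query $S\in\mathcal{S}_n^k$, Bob first checks whether the $k$ indices of $S$ lie in $k$ distinct blocks, an event $\mathcal{E}_S$ determined by $S$ alone. As in Eq.~(\ref{eq:eq2.0}) with $\ell=m$,
\[
    \operatorname*{Pr}_{S\sim\mathcal{S}_n^k}[\mathcal{E}_S] \geq 1-\frac{k(k-1)}{2m} = 1-o_n(1)
\]
under the hypothesis $k=o(\sqrt{m})$. When $\mathcal{E}_S$ fails Bob outputs a uniformly random sign; when it holds, Bob decodes each bit $x_{S_i}$ by running the Theorem~\ref{thr:thr2.1} decoder on the block containing $S_i$ and outputs $f$ applied to the decoded string $y\in\{-1,1\}^k$.

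As in the proof of Theorem~\ref{thr:thr4.0}, I would precompose with a shared random mask $r\in\{-1,1\}^n$ (the input is multiplied bitwise by $r$ before being sliced into blocks, and each decoded bit is multiplied by the corresponding $r_i$ afterwards) to symmetrize the error, so that for every fixed $x$ and every $i$ the decoded bit equals $x_i$ with probability exactly $\tfrac12+\tfrac12 q$ and $-x_i$ otherwise, where $q$ is the bias of the Theorem~\ref{thr:thr2.1} RAC at block size $n/m$. Because the $m$ block-RACs draw independent shares of SR, conditioned on $\mathcal{E}_S$ the $k$ errors are independent, so the decoded $y$ is $q$-correlated with $x_S$ in the sense of Definition~\ref{def:def1}. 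Hence
\[
    \operatorname*{Pr}[f(y)=f(x_S)\mid\mathcal{E}_S] = \tfrac12 + \tfrac12\operatorname{Stab}_q[f],
\]
and combining with the guessing contribution on the complement of $\mathcal{E}_S$ yields an overall bias of at least $(1-k(k-1)/(2m))\operatorname{Stab}_q[f] = (1-o_n(1))\operatorname{Stab}_q[f]$.

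Finally, the stated expression $q=\frac{2}{2^{n/m}}\binom{n/m-1}{\lfloor(n/m-1)/2\rfloor}$ is just the bias of the Theorem~\ref{thr:thr2.1} RAC rewritten for block size $n/m$. Its Stirling expansion yields the first lower bound $q\geq\sqrt{2m/(\pi n)}-O((n/m)^{-3/2})$, and the weaker but universal bound $q\geq\sqrt{m/(2n)}$ follows from an elementary estimate of the central binomial coefficient (e.g.\ $\binom{2t}{t}\geq 4^t/\sqrt{4t+1}$). The main conceptual subtlety, just as in Theorem~\ref{thr:thr4.0}, is ensuring the errors across the $k$ used blocks are independent after the symmetrization step; this is immediate here because the shared mask $r$ acts only on the inputs and does not couple the internal randomness of the $m$ block-RACs, which are drawn independently from the overall SR.
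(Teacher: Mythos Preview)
Your overall strategy matches the paper's, but there is one genuine gap. You use a \emph{fixed} partition of $[n]$ into $m$ blocks and then write
\[
    \operatorname*{Pr}_{S\sim\mathcal{S}_n^k}[\mathcal{E}_S] \geq 1-\frac{k(k-1)}{2m}.
\]
This is only an \emph{average} over $S$, yet the definition of an $f$-RAC with SR requires the bias bound to hold for \emph{every} $S\in\mathcal{S}_n^k$. With a fixed partition there exist sequences $S$ whose indices all lie in a single block (whenever $n/m\ge k$), so $\mathcal{E}_S$ fails with probability~$1$ and the protocol's bias for that $S$ is~$0$. The shared mask $r\in\{-1,1\}^n$ you introduce does nothing to remedy this: it flips bit \emph{values} but leaves the block membership of each index unchanged, so $\mathcal{E}_S$ is still determined by $S$ alone.

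The paper avoids this by making the partition itself part of the shared randomness (``its $n$ bits are randomly divided into $m$ sets $T_1,\dots,T_m$''). Then, by symmetry, for every fixed $S$ one has $\operatorname{Pr}_{T_1,\dots,T_m}[\mathcal{E}_S]\ge 1-\frac{k(k-1)}{2m}$. Equivalently, in the language of Theorem~\ref{thr:thr4.0} that you cite, one needs not only the mask $r$ but also the shared random permutation $\pi\in\mathbb{S}_n$; you have carried over $r$ but dropped $\pi$. Once the partition is randomised, the remainder of your argument (independence of errors across blocks, the $q$-correlated pair leading to $\operatorname{Stab}_q[f]$ via input randomisation, and the Stirling/central-binomial estimates for $q$) coincides with the paper's proof.
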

\begin{proof}
    Consider the RAC with SR from Theorem~\ref{thr:thr2.1}. Our protocol is the following. For the encoding of $x\in\{-1,1\}^n$, its $n$ bits are randomly divided into $m$ sets $T_1,\dots,T_m$, each with $n/m$ elements. Each set is encoded into $1$ bit with the $n/m\overset{\epsilon}{\mapsto} 1$ RAC, and the encoded string is $E(x)\in\{-1,1\}^m$. For decoding, one checks with the help of SR if all the $k$ indices in $S$ were encoded into different sets. If no, the value for $f(x_S)$ is drawn uniformly at random. If yes, let $D_i:\{-1,1\}\to\{-1,1\}$ be the decoding function for set $T_i$, which corresponds to bit $E(x)_i$. Then $z_{\ell_i} := D_{\ell_i}(E(x)_{\ell_i})$ is the decoded $x_{S_i}$, where, for all $i\in[k]$, $\ell_i\in[m]$ is such that $S_i \in T_{\ell_i}$.\footnote{The decoding map of the RAC from Theorem~\ref{thr:thr2.1} (see~\cite[Theorem 2]{ambainis2008quantum}) is just the identity map, so $z_{\ell_i} = E(x)_{\ell_i}$.} Write $z_T = z_{\ell_1}\dots z_{\ell_k}$. We output $f(z_T)$ for $f(x_S)$. 

    Let $\mathcal{E}_S$ be the event that all indices in $S\in\mathcal{S}_n^k$ are encoded in different sets. Similarly to Eq.~(\ref{eq:eq2.0}),
    \begin{align*}
        \operatorname*{Pr}_{S\sim\mathcal{S}_n^k}[\mathcal{E}_S] \geq 1 - \frac{k(k-1)}{2m}.
    \end{align*}
    The bias of correctly recovering any of the $n/m$ encoded bits is $q = \frac{2}{2^{n/m}}\binom{n/m - 1}{\lfloor \frac{n/m - 1}{2}\rfloor}$ by Theorem~\ref{thr:thr2.1}. Conditioning on $\mathcal{E}_S$ happening, we see that $x_S$ and $y_T$ are $q$-correlated according to Definition~\ref{def:def1}. Therefore
    %
    %
    \begin{align*}
        \operatorname*{Pr}_{T_1,\dots,T_m}[f(x_S)=f(z_T)|\mathcal{E}_S] =
        \operatorname*{Pr}_{\substack{(x,y) \\q\text{-correlated}}}[f(x)=f(y)] = \frac{1}{2} + \frac{1}{2}\operatorname{Stab}_q[f],
    \end{align*}
    where we used an input randomization via SR. With these considerations, the success probability of the protocol is
    \begin{align*}
        \operatorname*{Pr}_{T_1,\dots,T_m}[f(x_S)=f(z_T)] &\geq 
        \frac{1}{2}\frac{k(k-1)}{2m} + \left(1 - \frac{k(k-1)}{2m}\right)\left(\frac{1}{2} + \frac{1}{2}\operatorname{Stab}_q[f]\right)\\ 
        &= \frac{1}{2} + \frac{1}{2}\left(1 - \frac{k(k-1)}{2m}\right)\operatorname{Stab}_q[f]\\
        &= \frac{1}{2} + \frac{1}{2}(1 - o_n(1))\operatorname{Stab}_q[f],
    \end{align*}
    using that $k=o(\sqrt{m})$, from where the result follows by noticing that
    \begin{align*}
        q = \frac{2}{2^{n/m}}\binom{n/m - 1}{\lfloor \frac{n/m - 1}{2}\rfloor} \geq \begin{cases} \sqrt{\frac{2m}{\pi n}} - O\left(\frac{1}{(n/m)^{3/2}}\right),\\
        \sqrt{\frac{m}{2n}},
        \end{cases}
    \end{align*}
    with $n! = \sqrt{2\pi n}(n/e)^n(1 + O(1/n))$ and $\binom{2n}{n} \geq 2^{2n}/(2\sqrt{n})$~\cite[Chapter 10, Lemma 7]{macwilliams1977theory}.
\end{proof}

\begin{rem}
    It is possible to use Newman's theorem in the above theorem in order to transform SR into PR, but then $\Omega(\log{n})$ encoding bits would need to be used to encode the randomization procedure, thus leading to $m=\Omega(\log{n})$. Moreover, the final $f$-RAC would have worse bias compared to the one from Theorem~\ref{thr:thr4.0}.
\end{rem}
\begin{rem}
    The requirement $m|n$ can be dropped by adding extra bits into $x\in\{-1,1\}^n$ until $m|n'$, where $n'$ is the final number of bits.
\end{rem} 
\begin{rem}
    For $m=n/2$ or $m=n/3$, we have $\frac{2}{2^{n/m}}\binom{n/m - 1}{\lfloor \frac{n/m - 1}{2}\rfloor} = \frac{1}{2}$, so the resulting biases have $q_{m=n/2} = q_{m=n/3} = 1/2$. Also, by using induction, $\frac{2}{2^{n}}\binom{n - 1}{\lfloor \frac{n - 1}{2}\rfloor} \leq \frac{1}{2}$ for any $n\geq 2$. 
    %
\end{rem} 

\subsection{$f$-QRAC}

Exactly the same procedure from Theorem~\ref{thr:thr4.1} holds for $f$-QRACs if we replace the $n/m\mapsto1$ RAC from Theorem~\ref{thr:thr2.0} with the $n/m\mapsto1$ QRAC from Theorem~\ref{thr:thr2.2} when encoding the sets $T_1,\dots,T_m$.

\begin{thm}
    \label{thr:thr4.2}
    Let $m|n$ and $k=o(\sqrt{m})$. Let $f:\{-1,1\}^k\to\{-1,1\}$ be a Boolean function. There is an $n\overset{\varepsilon}{\mapsto}m$ $f$-QRAC with SR and bias $\varepsilon \geq (1-o_n(1))\operatorname{Stab}_q[f]$ with
    \begin{align*}
        q = \sqrt{\frac{8m}{3\pi n}} + O\left(\frac{1}{(n/m)^{3/2}}\right).
    \end{align*}
\end{thm}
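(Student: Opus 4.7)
The plan is to mirror the proof of Theorem~\ref{thr:thr4.1} step by step, substituting the classical $n/m\overset{\varepsilon}{\mapsto}1$ RAC of Theorem~\ref{thr:thr2.1} by the quantum $n/m\overset{\varepsilon}{\mapsto}1$ QRAC of Theorem~\ref{thr:thr2.2}. Using SR, partition $[n]$ uniformly at random into $m$ blocks $T_1,\dots,T_m$ of size $n/m$. For each block $T_i$, apply the QRAC of Theorem~\ref{thr:thr2.2} (using additional SR as permitted by Definition~\ref{def:def9}) to the $n/m$ bits $(x_j)_{j\in T_i}$, producing a single-qubit pure state $\ket{\psi_i(x,r)}$. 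The overall encoding is the product state $E(x,r)=\bigotimes_{i=1}^m \ket{\psi_i(x,r)}$, an $m$-qubit pure state as required by Definition~\ref{def:def9}.

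For decoding $f(x_S)$ given $S\in\mathcal{S}_n^k$, use SR to check whether the $k$ indices of $S$ land in $k$ distinct blocks. If not, output $\pm1$ uniformly at random. If yes, then for each $i\in[k]$ let $\ell_i\in[m]$ be such that $S_i\in T_{\ell_i}$, apply the single-qubit QRAC decoding POVM (tuned to position $S_i$ within $T_{\ell_i}$) to the $\ell_i$-th tensor factor to obtain $z_{\ell_i}\in\{-1,1\}$, and output $f(z_T)$ with $z_T=z_{\ell_1}\dots z_{\ell_k}$. Classical post-processing of the $k$ independent two-outcome POVMs (and guessing on the failure branch, averaged over SR) defines the required family $\{M^S_r\}_r$.

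The analysis is essentially identical to Theorem~\ref{thr:thr4.1}. As in Eq.~(\ref{eq:eq2.0}), the probability over the random partition that $\mathcal{E}_S$ (all $k$ indices in distinct blocks) holds is at least $1-k(k-1)/(2m)$. Conditioned on $\mathcal{E}_S$, the $k$ measurements act on disjoint tensor factors of a product state, so by Born's rule the outcomes $z_{\ell_1},\dots,z_{\ell_k}$ are mutually independent, and Theorem~\ref{thr:thr2.2} together with the SR-based input randomisation ensures that each $z_{\ell_i}$ equals $x_{S_i}$ with probability exactly $\tfrac{1}{2}+\tfrac{1}{2}q$ where $q=\sqrt{\tfrac{8m}{3\pi n}}+O((n/m)^{-3/2})$, uniformly in the input. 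Hence $(x_S,z_T)$ is a $q$-correlated pair in the sense of Definition~\ref{def:def1}, so $\operatorname{Pr}[f(z_T)=f(x_S)\mid\mathcal{E}_S]=\tfrac{1}{2}+\tfrac{1}{2}\operatorname{Stab}_q[f]$. Combining with the failure branch and using $k=o(\sqrt{m})$ yields overall success probability $\tfrac{1}{2}+\tfrac{1}{2}(1-o_n(1))\operatorname{Stab}_q[f]$, which gives the claimed bias.

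The one substantive point to check, rather than a routine calculation, is that the QRACs composed across blocks really produce independent, correctly-biased decoded bits: this uses both the tensor-product structure of the encoded state (so measurements factorise) and the fact that the per-block bias $q$ given by Theorem~\ref{thr:thr2.2} is uniform in the block's input string (so that the joint distribution of $(x_S,z_T)$ conditioned on $\mathcal{E}_S$ is exactly the $q$-correlated distribution). Everything else -- verifying $m|n$ can be relaxed by padding, and the asymptotic expansion of $q$ -- is inherited directly from Theorems~\ref{thr:thr4.1} and~\ref{thr:thr2.2}.
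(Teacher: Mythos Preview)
Your proposal is correct and follows exactly the same approach as the paper: the paper's proof of this theorem is a single sentence instructing the reader to replace the $n/m\mapsto 1$ RAC in the proof of Theorem~\ref{thr:thr4.1} with the $n/m\overset{\epsilon}{\mapsto}1$ QRAC from Theorem~\ref{thr:thr2.2}. Your write-up simply unpacks that instruction in more detail, including the tensor-product independence and input-randomisation points that the paper leaves implicit.
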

\begin{proof}
    Replace the $n/m\mapsto 1$ RAC in the proof of Theorem~\ref{thr:thr4.1} with the $n/m\overset{\epsilon}{\mapsto}1$ QRAC from Theorem~\ref{thr:thr2.2} with bias $\epsilon = \sqrt{\frac{8m}{3\pi n}} + O\left(\frac{1}{(n/m)^{3/2}}\right)$.
\end{proof}

\begin{rem}
    If $m=n/2$ or $m=n/3$, the usual (and optimal) $2\overset{1/\sqrt{2}}{\longmapsto}1$ QRAC or $3\overset{1/\sqrt{3}}{\longmapsto}1$ QRAC with PR can be used, respectively. The resulting biases have $q_{m=n/2} = 1/\sqrt{2}$ and $q_{m=n/3} = 1/\sqrt{3}$.
\end{rem} 

\subsection{$f$-EARAC}
\label{sec:sec5.3}

The same protocol can also be used for $f$-EARACs, now with the $n/m\mapsto1$ EARAC from Theorem~\ref{thr:thr2.3}.

\begin{thm}
    \label{thr:thr4.5}
    Let $m|n$ and $k=o(\sqrt{m})$. Let $f:\{-1,1\}^k\to\{-1,1\}$ be a Boolean function. There is an $n\overset{\varepsilon}{\mapsto}m$ $f$-EARAC with bias $\varepsilon \geq (1-o_n(1))\operatorname{Stab}_q[f]$ and
    \begin{align*}
        q = \sqrt{\frac{m}{n}}.
    \end{align*}
\end{thm}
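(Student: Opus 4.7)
The plan is to reuse the proof of Theorem~\ref{thr:thr4.1} almost verbatim, substituting the base $n/m \overset{q}{\mapsto} 1$ classical RAC with SR from Theorem~\ref{thr:thr2.1} by the $n/m \overset{q}{\mapsto} 1$ EARAC of Theorem~\ref{thr:thr2.3}, whose bias equals $q = \sqrt{m/n}$. Concretely, using the shared randomness that is automatically available from the shared entanglement (Definition~\ref{def:def10}), Alice partitions the coordinates $[n]$ uniformly at random into $m$ blocks $T_1,\dots,T_m$ of size $n/m$ and applies an independent copy of the base EARAC to each block, producing an $m$-bit message while consuming $m$ disjoint bundles of entanglement. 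On receiving $S\in\mathcal{S}_n^k$, Bob first checks (again via SR) whether the indices in $S$ fall in distinct blocks; if not, he outputs a uniform $\pm 1$, and otherwise he applies the relevant base EARAC decoding to each of the $k$ encoded bits, obtaining a string $z_T\in\{-1,1\}^k$ that he plugs into $f$.

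The analysis reduces to two facts. First, $\Pr_{T_1,\dots,T_m}[\mathcal{E}_S] \geq 1 - k(k-1)/(2m)$ for the good event $\mathcal{E}_S$ that all indices of $S$ land in distinct blocks; this is exactly Eq.~(\ref{eq:eq2.0}) with $\ell$ replaced by $m$, a purely combinatorial step. Second, conditioned on $\mathcal{E}_S$, the pair $(x_S, z_T)$ should be $q$-correlated coordinate-wise and with independent coordinates. Independence across blocks is immediate since distinct blocks use disjoint inputs, disjoint entanglement, and an independent choice of local measurements; the marginal $q$-correlation inside a single block follows from the bias guarantee of the base EARAC combined with the standard input-randomisation trick from Theorems~\ref{thr:thr4.1} and~\ref{thr:thr4.2} (mask Alice's input by a shared random $\pm 1$ string before encoding and un-mask after decoding), which turns the worst-case bias $q$ into precisely the $q$-correlated noise channel of Definition~\ref{def:def1}.

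Combining these two facts exactly as in Theorem~\ref{thr:thr4.1} gives
\begin{align*}
    \Pr[f(x_S)=f(z_T)] \geq \frac{1}{2} + \frac{1}{2}\left(1 - \frac{k(k-1)}{2m}\right)\operatorname{Stab}_q[f] = \frac{1}{2} + \frac{1}{2}(1 - o_n(1))\operatorname{Stab}_q[f],
\end{align*}
using $k = o(\sqrt{m})$, with $q = \sqrt{m/n}$. The only nontrivial step is the second one above: one must check that the specific EARAC protocol underlying Theorem~\ref{thr:thr2.3} is compatible with the coordinate-wise randomisation, so that the induced channel from $x_S$ to $z_T$ is genuinely the product of $k$ independent $q$-correlated single-bit channels rather than some input-dependent distortion. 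Since the protocols of \cite{pawlowski2010entanglement,tuanuasescu2020optimal} are built from XOR games and already rely on SR in this way, this verification is mechanical and no new ideas beyond those already deployed in Theorems~\ref{thr:thr4.1} and~\ref{thr:thr4.2} are needed.
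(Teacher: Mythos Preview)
Your proposal is correct and follows the same approach as the paper: the paper's proof is a one-liner that simply instructs the reader to replace the $n/m\mapsto 1$ RAC in the proof of Theorem~\ref{thr:thr4.1} with the $n/m\overset{\epsilon}{\mapsto} 1$ EARAC from Theorem~\ref{thr:thr2.3} with bias $\epsilon = \sqrt{m/n}$. Your write-up is in fact more detailed than the paper's, spelling out the independence-across-blocks justification and the input-randomisation step that the paper leaves implicit.
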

\begin{proof}
    Replace the $n/m\mapsto 1$ RAC in the proof of Theorem~\ref{thr:thr4.1} with the $n/m\overset{\epsilon}{\mapsto} 1$ EARAC from Theorem~\ref{thr:thr2.3} with bias $\epsilon = \sqrt{m/n}$.
\end{proof}
\begin{rem}
    We could also define an entanglement-assisted $f$-QRAC ($f$-EAQRAC) similarly to Definition~\ref{def:def10}, i.e., as an $f$-QRAC with SR where both parties share an unlimited amount of entanglement. Due to super-dense coding and teleportation, an $n\overset{\varepsilon}{\mapsto}m$ $f$-EAQRAC is equivalent to an $n\overset{\varepsilon}{\mapsto}2m$ $f$-EARAC, meaning that there is an $n\overset{\varepsilon}{\mapsto}m$ $f$-EAQRAC with $k=o(\sqrt{m})$ and bias $\varepsilon \geq (1-o_n(1))\operatorname{Stab}_q[f]$ with $q=\sqrt{\frac{2m}{n}}$.
\end{rem}


If $f:\{-1,1\}\to\{-1,1\}$ is $f(x) = x$, i.e., when considering the usual $n\mapsto m$ EARAC, the above result tells us that the success probability is just
\begin{align}
    \label{eq:eq3.1}
    p = \frac{1}{2} + \frac{1}{2}\sqrt{\frac{m}{n}}.
\end{align}
Moreover, we note that the $n\mapsto m$ EARAC is formed by a grouping of $n/m\mapsto 1$ EARACs, such that the $m$ outcomes, i.e., the bits of the encoding message $E(x)\in\{-1,1\}^m$, are all independent of each other. More precisely, for each $i\in[n]$, there is a unique $j\in[m]$ such that $\operatorname{Pr}[x_i|E(x)] = \operatorname{Pr}[x_i|E(x)_j]$. Under this assumption, we can prove that Eq.~(\ref{eq:eq3.1}) is optimal by the optimality of its parts. Consider breaking the initial string $x\in\{-1,1\}^n$ into $m$ blocks, the $i$-th block containing $r_i$ elements. Then the success probability of the resulting $n\mapsto m$ EARAC is
\begin{align*}
    \sum_{i=1}^m \frac{r_i}{n} \left(\frac{1}{2} + \frac{1}{2\sqrt{r_i}}\right) = \frac{1}{2} + \frac{1}{2n}\sum_{i=1}^m \sqrt{r_i},
\end{align*}
which is maximized by taking $r_i = n/m$ for all $i\in[m]$. Since the $n/m\mapsto 1$ EARACs are optimal by Theorem~\ref{thr:thr2.3}, so is Eq.~(\ref{eq:eq3.1}) (under the assumption that the $n\mapsto m$ EARAC is formed by $m$ independent EARACs on $1$ encoding bit).

We can use Theorem~\ref{thr:thr2.0} to obtain the following Corollary from Theorems~\ref{thr:thr4.0},~\ref{thr:thr4.1},~\ref{thr:thr4.2} and~\ref{thr:thr4.5}.

\begin{cor}
    \label{cor:cor2}
    Let $f:\{-1,1\}^k\to\{-1,1\}$ be a Boolean function with \emph{pure high degree} $\mathcal{h} = \min\{|S|:\widehat{f}(S)\neq 0\}$. Let $W^j[f] = \sum_{\substack{S\subseteq[k]\\|S|=j}} \widehat{f}(S)^2$.
    \begin{enumerate}[(a)]
        \item Let $\ell\in\mathbb{N}$, $m=\Omega(\ell\log{n})$ and $k=o(\sqrt{\ell})$. There is an $n\overset{\varepsilon}{\mapsto}m$ $f$-RAC with PR and bias $\varepsilon \geq (1-o_n(1))W^\mathcal{h}[f]\left(\frac{m}{n} - \frac{5\log_2{(n/\ell)}}{n/\ell}\right)^{\mathcal{h}/2}$.
        \item Let $k=o(\sqrt{m})$. There is an $n\overset{\varepsilon}{\mapsto}m$ $f$-RAC with SR and bias $\varepsilon \geq (1-o_n(1)) W^\mathcal{h}[f]\left(\frac{m}{2 n}\right)^{\mathcal{h}/2}$. 
        \item Let $k=o(\sqrt{m})$. There is an $n\overset{\varepsilon}{\mapsto}m$ $f$-QRAC with SR and bias $\varepsilon \geq (1-o_n(1))W^\mathcal{h}[f] \left(\frac{8m}{3\pi n}\right)^{\mathcal{h}/2}$.
        \item Let $k=o(\sqrt{m})$. There is an $n\overset{\varepsilon}{\mapsto}m$ $f$-EARAC with bias $\varepsilon \geq (1-o_n(1))W^\mathcal{h}[f] \left(\frac{m}{n}\right)^{\mathcal{h}/2}$.
    \end{enumerate}
\end{cor}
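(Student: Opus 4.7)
The plan is to combine each of the four lower bounds in Theorems~\ref{thr:thr4.0},~\ref{thr:thr4.1},~\ref{thr:thr4.2} and~\ref{thr:thr4.5} with the Fourier expansion of the noise stability provided by Theorem~\ref{thr:thr2.0}, and then simply discard all but the smallest-degree nonzero Fourier weight. The four theorems supply biases of the form $\varepsilon \geq (1-o_n(1))\operatorname{Stab}_q[f]$ for parameters $q_a,q_b,q_c,q_d$ given respectively by the expressions $\sqrt{m/n - 5\log_2(n/\ell)/(n/\ell)}$, $\sqrt{m/(2n)}$, $\sqrt{8m/(3\pi n)}$ and $\sqrt{m/n}$. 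All these $q$ values lie in $[0,1]$ for the ranges of parameters considered, which is the setting in which Theorem~\ref{thr:thr2.0} yields a nonnegative monomial expansion.

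First I would invoke Theorem~\ref{thr:thr2.0} to write
\begin{align*}
    \operatorname{Stab}_q[f] = \sum_{j=0}^{k} q^j W^j[f].
\end{align*}
By the definition of pure high degree, $\widehat{f}(S)=0$ whenever $|S|<\mathcal{h}$, so $W^j[f]=0$ for $j<\mathcal{h}$ and the sum actually starts at $j=\mathcal{h}$. Next I would use the elementary observation that each $W^j[f]\geq 0$ (it is a sum of squares of real Fourier coefficients of the real-valued function $f$), together with $q\in[0,1]$, to drop every term beyond $j=\mathcal{h}$:
\begin{align*}
    \operatorname{Stab}_q[f] = \sum_{j=\mathcal{h}}^{k} q^j W^j[f] \geq q^{\mathcal{h}} W^{\mathcal{h}}[f].
\end{align*}

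Finally I would substitute the four $q$ values from Theorems~\ref{thr:thr4.0},~\ref{thr:thr4.1},~\ref{thr:thr4.2} and~\ref{thr:thr4.5} into this inequality. For part~(a) the bound becomes $W^{\mathcal{h}}[f]\bigl(m/n-5\log_2(n/\ell)/(n/\ell)\bigr)^{\mathcal{h}/2}$; for parts~(b),~(c) and~(d) it becomes $W^{\mathcal{h}}[f](m/(2n))^{\mathcal{h}/2}$, $W^{\mathcal{h}}[f](8m/(3\pi n))^{\mathcal{h}/2}$ and $W^{\mathcal{h}}[f](m/n)^{\mathcal{h}/2}$ respectively. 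Carrying along the $(1-o_n(1))$ prefactor from each parent theorem gives the four claimed bounds.

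There is no real obstacle: the only thing to double-check is that dropping the higher-degree Fourier weights is legitimate, which is immediate once one notes both that the $q$'s above are in $[0,1]$ and that all $W^j[f]$ are nonnegative. The $o_n(1)$ terms absorbed in the $(1-o_n(1))$ prefactor in each parent theorem are unaffected by this manipulation, since multiplication by the nonnegative quantity $q^{\mathcal{h}}W^{\mathcal{h}}[f]$ preserves the direction of the inequality.
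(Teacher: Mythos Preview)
Your proposal is correct and follows exactly the approach the paper indicates: the paper's proof is a single line stating that the corollary follows from Theorems~\ref{thr:thr4.0},~\ref{thr:thr4.1},~\ref{thr:thr4.2} and~\ref{thr:thr4.5} via Theorem~\ref{thr:thr2.0}, and you have spelled out precisely that deduction. The only point worth noting is that for part~(c) the $q$ in Theorem~\ref{thr:thr4.2} carries an additive $O((n/m)^{-3/2})$ correction, but this is harmlessly absorbed into the $(1-o_n(1))$ prefactor just as you describe.
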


\subsection{$f$-PRRAC}

We now present a protocol for the $f$-PRRAC, based on reducing the problem to the standard random access code setting, and then using a protocol defined in~\cite{pawlowski2009information}. This protocol was used to show the violation of information causality by means of a pyramid of Popescu-Rohrlich boxes and nesting van Dam's protocol~\cite{van2013implausible}, which allows us to decode the value of $f(x_S)$ for any $S\in\mathcal{S}_n^k$ with just one encoded bit. This procedure of pyramiding and nesting was also used in the context of EARACs in~\cite{pawlowski2010entanglement,tuanuasescu2020optimal} under the name of concatenation.

\begin{thm}
    Let $f:\{-1,1\}^k\to\{-1,1\}$ be a Boolean function. For any $n\in\mathbb{N}$, there is an $n\overset{1}{\mapsto}1$ $f$-PRRAC.
\end{thm}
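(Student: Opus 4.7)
The strategy is to reduce the $f$-PRRAC problem to the existence of a standard $t\overset{1}{\mapsto}1$ PRRAC for arbitrary $t$, and then to invoke the pyramid of Popescu-Rohrlich boxes (van Dam's nested protocol) used in~\cite{pawlowski2009information}. The crucial fact that makes the reduction free is that preprocessing $x$ with $f$ is done locally by Alice, so the only scarce resource, the single communicated bit, is untouched.

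\textbf{Reduction.} Given $x\in\{-1,1\}^n$, Alice privately precomputes the string $x_f\in\{-1,1\}^t$ of length $t=|\mathcal{S}_n^k|$ whose $S$-th entry is $f(x_S)$. Bob's query $S\in\mathcal{S}_n^k$ now becomes an index into $x_f$, and recovering $f(x_S)$ is the same as recovering the $S$-th bit of $x_f$. Hence any $t\overset{1}{\mapsto}1$ PRRAC on strings of length $t$ yields an $n\overset{1}{\mapsto}1$ $f$-PRRAC after this classical preprocessing; the large (but finite) table $x_f$ never leaves Alice.

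\textbf{Base case of van Dam's protocol.} For $t=2$, Alice holds $(a_0,a_1)\in\{0,1\}^2$ and Bob holds $i\in\{0,1\}$, and they share one PR box. Alice feeds $a_0\oplus a_1$ into her side and Bob feeds $i$ into his, obtaining outputs $u,v\in\{0,1\}$ with $u\oplus v=(a_0\oplus a_1)\cdot i$. Alice sends the single bit $m:=a_0\oplus u$, and Bob computes $m\oplus v=a_0\oplus(a_0\oplus a_1)\cdot i=a_i$, so decoding is exact.

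\textbf{Inductive step and main obstacle.} For $t=2^\ell$, I would pad $x_f$ to a power of two if necessary, split it into halves $L,R\in\{0,1\}^{2^{\ell-1}}$ and recursively apply the $2^{\ell-1}\overset{1}{\mapsto}1$ PRRAC on each half using Bob's low-order index bits; this produces two would-be messages $A(L)$ and $B(R)$. A top-level PR box then takes $A\oplus B$ from Alice and Bob's most significant index bit $i_0$, and the resulting outputs are absorbed, exactly as in the base case, into a single physical bit that Alice actually transmits. The result is a binary pyramid of PR boxes of depth $\lceil\log_2 t\rceil$ with one bit of communication at the apex. The main technical point I would have to check carefully is that along every root-to-leaf branch of the pyramid the PR-box outputs telescope so that Bob's final XOR equals $(x_f)_S$ with probability $1$ for \emph{every} index $S$, not merely in expectation; this reduces to propagating the identity $u\oplus v=(\text{Alice's input})\cdot(\text{Bob's input})$ through the nested levels, and is precisely the content of van Dam's construction~\cite{van2013implausible} adapted in~\cite{pawlowski2009information}. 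Composing with the reduction step then yields the claimed $n\overset{1}{\mapsto}1$ $f$-PRRAC for every $n$ and every $f$.
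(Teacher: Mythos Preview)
Your proposal is correct and follows essentially the same approach as the paper: reduce to a standard RAC by having Alice locally precompute the table $x_f$ of all values $f(x_S)$, and then invoke the recursive pyramid-of-PR-boxes protocol of~\cite{pawlowski2009information,van2013implausible} to retrieve any single entry with one bit of communication. Your base case, halving recursion, and padding to a power of two mirror the paper's construction, and the ``main obstacle'' you flag (the telescoping of PR-box outputs along the chosen root-to-leaf path) is exactly what the paper imports wholesale from those references.
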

\begin{proof}
    To ease the notation, we shall use $\{0,1\}$ instead of $\{-1,1\}$ during the proof. We shall also name the encoding and decoding parties Alice and Bob, respectively, and refer to a Popescu-Rohrlich box as PR-box. Let\footnote{If $f$ is symmetric, the size $t$ can be decreased to $\binom{n}{k}$ by ignoring all the redundant permutations of the $k$-sets.} $t := |\mathcal{S}_n^k| = k!\binom{n}{k}$ and define the string $a\in\{0,1\}^t$ as $a_S := f(x_S)$, where $\mathcal{S}_n^k$ is arranged in lexicographic order.\footnote{Here we use $a_S$ to denote a single bit of $a$, whereas $x_S$ denotes a subsequence of $x$.} Bob is interested in bit $a_S$, whose index position can be described by a $t$-bit string $b = \sum_{i=0}^{t-1}b_i 2^i$, i.e., the considered function is $g_t(a,b) := a_b$. The remainder of the argument is the same as the protocol of~\cite{pawlowski2009information}, but we include the details for completeness. For $t = 1$ we have that
    \begin{align*}
        g_t((a_0,a_1),b_0) = a_0\oplus b_0(a_0\oplus a_1).
    \end{align*}
    Alice inputs $a_0\oplus a_1$ into a PR-box, while Bob inputs $b_0$. Alice obtains the output $A$ and sends the message $y=a_0\oplus A$ to Bob, who can obtain $y\oplus B=a_b$ using his output $B$, since, by the PR-box property, $A\oplus B = b_0(a_0\oplus a_1)$.
    
    For $t > 1$, write $a = a'a''$, where $a' = a_0\dots a_{t/2 - 1}\in\{0,1\}^{t/2}$ and $a'' = a_{t/2}\dots a_{t -1}\in\{0,1\}^{t/2}$. Then one can show that
    \begin{align*}
        g_t(a,b) = g_{t-1}(a',b')\oplus b_{t-1}\left(g_{t-1}(a',b')\oplus g_{t-1}(a'',b')\right),
    \end{align*}
    where $b' = b_0\dots b_{t-2}\in\{0,1\}^{t-1}$. Therefore we can construct a recursive protocol in $t$, which will encompass all values of $n$. The protocol uses a pyramid of $2^t-1$ Popescu-Rohrlich boxes placed on $t$ levels. The case $t = 1$ was explained above. For $t > 1$, Alice and Bob use the protocol on inputs $(a',b')$ and $(a'',b')$, which involves $2^{t/2} - 1$ PR-boxes in each one. Alice's outputs of each protocol are $y'$ and $y''$, which she inputs into the last PR-box, similarly to the case $t=1$, as $y'\oplus y''$, while Bob inputs $b_{t-1}$. Given Alice's final output $A$, she sends $y=y'\oplus A$ to Bob, who uses his output $B_{t-1}$ to obtain $y'\oplus b_{t-1}(y'\oplus y'')$. If $b_{t-1} = 0$, he gets $y'$, otherwise, if $b_{t-1}=1$, he gets $y''$. With these, he can recursively go up the pyramid based on the protocol for $t-1$ bits, which tells him which boxes to read. Looking at the binary decomposition of $b$, Bob goes $(t-r)$ times to the left bit, and $r$ times to the right bit, where $r=\sum_{i=0}^{t-1}b_i$. His final output will be $y\oplus B_0\oplus\cdots\oplus B_{t-1}$, where $B_j$ is the output for the PR-box that Bob uses at level $j$. Bob will only need the outputs of $t$ PR-boxes, while Alice uses $2^{t} - 1$ PR-boxes in total.
\end{proof}

\section{Bias Upper Bounds}
\label{sec:sec5}

In order to prove an upper bound on the bias of any $n\overset{\varepsilon}{\mapsto}m$ $f$-QRAC with SR, we shall use the following equivalent version of Definition~\ref{def:def9}, which comes from input randomization, i.e., from considering the average success probability over the inputs, and from the following fact.
\begin{fact}[\cite{helstrom1976quantum}]
    \label{lem:lem3.5.c3}
    Let $\rho$ be an unknown state picked from the set $\{\rho_0,\rho_1\}$ with probability $p$ and $1-p$, respectively.
    The optimal success probability of predicting which state it is by a POVM is
    \begin{align*}
        \frac{1}{2} + \frac{1}{2}\|p\rho_0 - (1-p)\rho_1\|_{\operatorname{tr}}.
    \end{align*}
\end{fact}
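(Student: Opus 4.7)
The plan is to parameterize over all two-outcome POVMs, rewrite the success probability as a trace inner product with a fixed Hermitian operator, and then use the spectral decomposition to optimize. Let $\Delta := p\rho_0 - (1-p)\rho_1$, which is Hermitian since both $\rho_0$ and $\rho_1$ are. For any POVM $\{M_0, M_1\}$ with $M_0 + M_1 = I$ and $0 \preceq M_0, M_1 \preceq I$, the success probability of the natural decision rule (output $0$ on outcome $M_0$, output $1$ on outcome $M_1$) is
\begin{align*}
P_{\mathrm{succ}}(M_0) = p\operatorname{Tr}(M_0 \rho_0) + (1-p)\operatorname{Tr}(M_1 \rho_1) = (1-p) + \operatorname{Tr}\bigl(M_0 \Delta\bigr),
\end{align*}
where I used $M_1 = I - M_0$ and $\operatorname{Tr}(\rho_1) = 1$. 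So the problem reduces to maximizing $\operatorname{Tr}(M_0 \Delta)$ over Hermitian operators $M_0$ with $0 \preceq M_0 \preceq I$.

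Next I would write the spectral decomposition $\Delta = \Delta_+ - \Delta_-$, where $\Delta_+$ and $\Delta_-$ are the positive-semidefinite operators supported on the non-negative and negative eigenspaces of $\Delta$, respectively; then $|\Delta| = \Delta_+ + \Delta_-$ and $\|\Delta\|_{\operatorname{tr}} = \operatorname{Tr}(\Delta_+) + \operatorname{Tr}(\Delta_-)$. The key optimization step is the inequality
\begin{align*}
\operatorname{Tr}(M_0 \Delta) = \operatorname{Tr}(M_0 \Delta_+) - \operatorname{Tr}(M_0 \Delta_-) \le \operatorname{Tr}(\Delta_+),
\end{align*}
which follows because $0 \preceq M_0 \preceq I$ implies $\operatorname{Tr}(M_0 \Delta_+) \le \operatorname{Tr}(\Delta_+)$ and $\operatorname{Tr}(M_0 \Delta_-) \ge 0$. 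Equality is achieved by taking $M_0$ to be the projector $\Pi_+$ onto the non-negative eigenspace of $\Delta$, which is a legitimate POVM element.

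To conclude, I would combine this with the identity $\operatorname{Tr}(\Delta) = p - (1-p) = 2p-1$, which gives $\operatorname{Tr}(\Delta_+) - \operatorname{Tr}(\Delta_-) = 2p-1$. Solving the two-equation system with $\operatorname{Tr}(\Delta_+) + \operatorname{Tr}(\Delta_-) = \|\Delta\|_{\operatorname{tr}}$ yields $\operatorname{Tr}(\Delta_+) = p - \tfrac{1}{2} + \tfrac{1}{2}\|\Delta\|_{\operatorname{tr}}$, so
\begin{align*}
\max_{M_0} P_{\mathrm{succ}}(M_0) = (1-p) + p - \tfrac{1}{2} + \tfrac{1}{2}\|p\rho_0 - (1-p)\rho_1\|_{\operatorname{tr}} = \tfrac{1}{2} + \tfrac{1}{2}\|p\rho_0 - (1-p)\rho_1\|_{\operatorname{tr}}.
\end{align*}
Finally, I would remark that restricting attention to the decision rule ``output $i$ on outcome $M_i$'' is without loss of generality, since any randomized post-processing is a convex combination of deterministic rules and the maximum of a linear functional over a convex set is attained at an extreme point. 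The only mildly subtle step is the optimality argument for $\Pi_+$; everything else is bookkeeping with the spectral decomposition.
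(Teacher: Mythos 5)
Your proof is correct and is the standard argument for the Helstrom bound: reduce to maximizing $\operatorname{Tr}(M_0\Delta)$ with $\Delta = p\rho_0-(1-p)\rho_1$, optimize via the Jordan decomposition $\Delta=\Delta_+-\Delta_-$, and recover the trace norm from $\operatorname{Tr}(\Delta)=2p-1$. The paper gives no proof of this statement --- it is quoted as a known Fact from Helstrom's book --- and your derivation is exactly the textbook proof that the citation stands in for.
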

\begin{defn}
    Let $f:\{-1,1\}^k\to\{-1,1\}$ be a Boolean function. An $n\overset{\varepsilon}{\mapsto}m$ $f$-QRAC with SR is a map $\rho:\{-1,1\}^n\to\mathbb{C}^{2^m\times 2^m}$ that assigns an $m$-qubit density matrix $\rho(x)$ to every $x\in\{-1,1\}^n$ and has the property that
		\begin{align*}
			\operatorname*{\mathbb{E}}_{S\sim \mathcal{S}_n^k}\left[\frac{1}{2^n}\left\|\sum_{x\in\{-1,1\}^n} f(x_S)\rho(x)\right\|_{\operatorname{tr}}\right] \geq \varepsilon.
		\end{align*}
\end{defn}

\begin{thm}
    \label{thr:thr4.4}
    Let $f:\{-1,1\}^k\to\{-1,1\}$ be a Boolean function. For any $n\overset{\varepsilon}{\mapsto}m$ $f$-QRAC with SR and $k=o(n)$ the following holds: for any $\eta > 2\ln{2}$ there is a constant $C_\eta$ such that
    \begin{align}
        \label{eq:eq4.2}
        \varepsilon \leq C_\eta\sum_{\ell=0}^k L_{1,\ell}(f)\left(\frac{\eta m}{n}\right)^{\ell/2},
    \end{align}
    where $L_{1,\ell}(f) = \sum_{\substack{T\subseteq[k]\\ |T| = \ell}}\big|\widehat{f}(T)\big|$ is the $1$-norm of the $\ell$-th level of the Fourier transform of $f$.
\end{thm}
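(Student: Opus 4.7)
The plan is to mimic the Ben-Aroya--Regev--de Wolf approach from Theorem~\ref{thr:thr1.4}, but applied termwise to the Fourier expansion of $f$. Starting from the alternative definition of an $f$-QRAC with SR (via input randomization and Fact~\ref{lem:lem3.5.c3}), I would write
\begin{equation*}
    \varepsilon \leq \operatorname*{\mathbb{E}}_{S\sim\mathcal{S}_n^k}\frac{1}{2^n}\left\|\sum_{x\in\{-1,1\}^n} f(x_S)\rho(x)\right\|_{\operatorname{tr}}
\end{equation*}
and expand $f(x_S)=\sum_{T\subseteq[k]}\widehat{f}(T)\chi_T(x_S)$. Since $\chi_T(x_S)=\chi_{S(T)}(x)$ for $S(T):=\{S_i:i\in T\}$, the inner sum equals $2^n\sum_T \widehat{f}(T)\widehat{\rho}(S(T))$. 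The triangle inequality for the trace norm and linearity of expectation then give
\begin{equation*}
    \varepsilon \leq \sum_{\ell=0}^{k}\sum_{\substack{T\subseteq[k]\\|T|=\ell}}|\widehat{f}(T)|\operatorname*{\mathbb{E}}_{S\sim\mathcal{S}_n^k}\|\widehat{\rho}(S(T))\|_{\operatorname{tr}}.
\end{equation*}

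Next, I would observe that for fixed $T$ of size $\ell$, when $S$ is a uniformly random ordered $k$-tuple of distinct indices, $S(T)$ is distributed as a uniformly random $\ell$-subset $U$ of $[n]$. Hence the problem reduces to bounding $\mathbb{E}_{|U|=\ell}\|\widehat{\rho}(U)\|_{\operatorname{tr}}$. By Cauchy--Schwarz,
\begin{equation*}
    \operatorname*{\mathbb{E}}_{|U|=\ell}\|\widehat{\rho}(U)\|_{\operatorname{tr}} \leq \sqrt{\binom{n}{\ell}^{-1}\sum_{|U|=\ell}\|\widehat{\rho}(U)\|_{\operatorname{tr}}^2},
\end{equation*}
and Theorem~\ref{thm:thm2.2.c1} applied with a parameter $\delta\in(0,1]$ to be chosen yields $\sum_{|U|=\ell}\|\widehat{\rho}(U)\|_{\operatorname{tr}}^2\leq \delta^{-\ell}2^{2\delta m}$.

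Combining and using the elementary estimate $\binom{n}{\ell}\geq(en/\ell)^{\ell}/\sqrt{2\pi\ell}\cdot(1-o(1))$ (valid since $\ell\leq k=o(n)$), I obtain
\begin{equation*}
    \operatorname*{\mathbb{E}}_{|U|=\ell}\|\widehat{\rho}(U)\|_{\operatorname{tr}} \leq (2\pi\ell)^{1/4}(1+o(1))\left(\frac{\ell}{en\delta}\right)^{\ell/2}2^{\delta m}.
\end{equation*}
I would then choose $\delta=\ell/(\eta' m)$ for a parameter $\eta'$, which turns the right-hand side into $(2\pi\ell)^{1/4}(1+o(1))\,(\eta'\,2^{2/\eta'}/e)^{\ell/2}(m/n)^{\ell/2}$. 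The map $\eta'\mapsto\eta'\,2^{2/\eta'}/e$ is minimized at $\eta'=2\ln 2$ with minimum value $2\ln 2$, so for any target $\eta>2\ln 2$ one can pick $\eta'$ so that $\eta'\,2^{2/\eta'}/e<\eta$, and then absorb the slowly growing prefactor $(2\pi\ell)^{1/4}$ into the extra slack $(\eta/(\eta'\,2^{2/\eta'}/e))^{\ell/2}$ via an $\ell$-independent constant $C_\eta$. Plugging the resulting bound $\mathbb{E}_{|U|=\ell}\|\widehat{\rho}(U)\|_{\operatorname{tr}}\leq C_\eta(\eta m/n)^{\ell/2}$ back into the Fourier-level sum gives exactly~(\ref{eq:eq4.2}).

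The main obstacle, as in the parity case, is the optimization of $\delta$ in conjunction with the $\binom{n}{\ell}$ lower bound: one has to make sure that the $\ell$-dependent prefactors from Stirling's approximation and from $2^{2\delta m}$ can be uniformly absorbed into a single constant $C_\eta$ for every $0\leq\ell\leq k$, which is what forces the restriction $\eta>2\ln 2$ and the hypothesis $k=o(n)$. Everything else (the Fourier decomposition, the triangle inequality, and the reduction to a random $\ell$-subset) is bookkeeping on top of the matrix-valued hypercontractive inequality.
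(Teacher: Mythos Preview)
Your proposal is correct and follows essentially the same route as the paper: Fourier-expand $f(x_S)$, reduce $\mathbb{E}_{S\sim\mathcal{S}_n^k}\|\widehat{\rho}(S(T))\|_{\operatorname{tr}}$ to an expectation over uniform $\ell$-subsets, apply Cauchy--Schwarz (Jensen), invoke Theorem~\ref{thm:thm2.2.c1} with $\delta\propto \ell/m$, and absorb the Stirling prefactors into the slack from $\eta>2\ln 2$. The only cosmetic difference is that the paper fixes $\delta=\ell/((2\ln 2)m)$ from the outset (the optimum of your map $\eta'\mapsto \eta'\,2^{2/\eta'}/e$) rather than optimizing afterwards.
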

\begin{proof}
    We start by writing the following.
	\begin{align*}
		\frac{1}{2^n}\left\|\sum_{x\in\{-1,1\}^n} f(x_S)\rho(x)\right\|_{\operatorname{tr}} &=
		\frac{1}{2^n}\left\|\sum_{V\subseteq[n]}\sum_{T\subseteq[k]}  \widehat{f}(T)\widehat{\rho}(V)\sum_{x\in\{-1,1\}^n} \chi_V(x)\chi_T(x_S)\right\|_{\operatorname{tr}}\\
		&= \left\|\sum_{T\subseteq[k]}  \widehat{f}(T)\widehat{\rho}(S_{T})\right\|_{\operatorname{tr}} \\
		&\leq \sum_{T\subseteq[k]}\big|\widehat{f}(T)\big|\|\widehat{\rho}(S_{T})\|_{\operatorname{tr}},
	\end{align*}
	where $S_{T} = \{S_i ~|~ i\in T\}$. 
	Then
	\begin{align*}
		\varepsilon \leq \sum_{\ell = 0}^k\sum_{\substack{T\subseteq[k]\\ |T| = \ell}}\big|\widehat{f}(T)\big|\operatorname*{\mathbb{E}}_{S\sim \mathcal{S}_n^k}\big[\|\widehat{\rho}(S_{T})\|_{\operatorname{tr}}\big],
	\end{align*}
	but, for a given $T\subseteq[k]$ with $|T| = \ell$,
	\begin{align*}
		\operatorname*{\mathbb{E}}_{S\sim \mathcal{S}_n^k}\big[\|\widehat{\rho}(S_{T})\|_{\operatorname{tr}}\big] & \frac{1}{k!\binom{n}{k}}\sum_{S\in \mathcal{S}_n^k}\|\widehat{\rho}(S_{T})\|_{\operatorname{tr}} 
		= \frac{\ell!(k-\ell)!}{k!\binom{n}{k}}\binom{n-\ell}{k-\ell}\sum_{S\in \binom{[n]}{\ell}}\|\widehat{\rho}(S)\|_{\operatorname{tr}}
		= \operatorname*{\mathbb{E}}_{S\sim \binom{[n]}{\ell}}\big[\|\widehat{\rho}(S)\|_{\operatorname{tr}}\big],
	\end{align*}
	and thus, using Jensen's inequality,
	\begin{align*}
		\varepsilon \leq \sum_{\ell = 0}^k\sum_{\substack{T\subseteq[k]\\ |T| = \ell}}\big|\widehat{f}(T)\big|\operatorname*{\mathbb{E}}_{S\sim \binom{[n]}{\ell}}\big[\|\widehat{\rho}(S)\|_{\operatorname{tr}}\big] 
		\leq \sum_{\ell = 0}^k L_{1,\ell}(f) \sqrt{\operatorname*{\mathbb{E}}_{S\sim \binom{[n]}{\ell}}\big[\|\widehat{\rho}(S)\|^2_{\operatorname{tr}}\big]}.
	\end{align*}
	We now use Theorem~\ref{thm:thm2.2.c1} with $\delta = \frac{\ell}{(2\ln{2})m}$, taking only the sum on $S$ with $|S| = \ell$,
	\begin{align*}
		\operatorname*{\mathbb{E}}_{S\sim \binom{[n]}{\ell}}\big[\|\widehat{\rho}(S)\|^2_{\operatorname{tr}}\big] \leq 2^{2\delta m}\delta^{-\ell}\binom{n}{\ell}^{-1} = \left(\frac{(2e\ln{2})m}{\ell}\right)^\ell \binom{n}{\ell}^{-1},
	\end{align*}
	to finally obtain
	\begin{align*}
		\varepsilon \leq \sum_{\ell = 0}^k \binom{n}{\ell}^{-1/2} L_{1,\ell}(f) \left(\frac{(2e\ln{2})m}{\ell}\right)^{\ell/2}.
	\end{align*}
	From here we can use Stirling's approximation $n! = \Theta(\sqrt{n}(n/e)^n)$ to obtain
	\begin{align*}
		\binom{n}{\ell} =\Theta\left(\sqrt{\frac{n}{\ell(n-\ell)}}\left(\frac{n}{\ell}\right)^\ell \left(1 + \frac{\ell}{n-\ell}\right)^{n-\ell}\right).
	\end{align*}
	We use the fact that for large enough $n/\ell$ we have $(1 + \ell/(n-\ell))^{(n-\ell)/\ell} > (2e\ln{2})/\eta$, where $\eta > 2\ln{2}$, and that the factor $\sqrt{n/\ell(n-\ell)} \geq \sqrt{1/k}$ can be absorbed by this approximation. Then there is a constant $C_\eta$ such that Eq.~(\ref{eq:eq4.2}) holds.
\end{proof}

A few different bounds can be obtained from the above theorem, some with a clearer meaning.

\begin{cor}
    \label{cor:cor1}
    Let $f:\{-1,1\}^k\to\{-1,1\}$ be a Boolean function. Let $r\in[0,1]$. For any $n\overset{\varepsilon}{\mapsto}m$ $f$-QRAC with SR and $k=o(n)$ the following holds: for any $\eta > 2\ln{2}$ there is a constant $C_\eta$ such that
    \begin{subnumcases}{\varepsilon \leq}
            C_\eta \sqrt{\operatorname{Stab}_{q^{2r}}[f]\sum_{S\in\operatorname{supp}(\widehat{f})}q^{2(1-r)|S|}}, \label{eq:eq4.2a}\\
            C_\eta \hat{\|}f\hat{\|}_1 \left(\frac{\eta m}{n}\right)^{\mathcal{h}/2},\label{eq:eq4.2b}\\
            C_\eta 2^{\operatorname{deg}(f)-1}\operatorname{Stab}_q[f],\label{eq:eq4.2c}
    \end{subnumcases}
    where $q=\sqrt{\frac{\eta m}{n}}$, $\operatorname{supp}(\widehat{f}) = \{S\subseteq[k]~|~\widehat{f}(S) \neq 0\}$ is the support of $f$, $\mathcal{h} = \min\{|S|:\widehat{f}(S) \neq 0 \}$ is the pure high degree of $f$, $\operatorname{deg}(f) = \max\{|S|:\widehat{f}(S)\neq 0\}$ is the degree of $f$ and $\hat{\|}f\hat{\|}_1 = \sum_{S\subseteq[k]} |\widehat{f}(S)|$ is the Fourier $1$-norm of $f$.
\end{cor}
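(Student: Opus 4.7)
The plan is to derive all three inequalities from the single bound of Theorem~\ref{thr:thr4.4}, which with $q = \sqrt{\eta m/n}$ reads $\varepsilon \leq C_\eta \sum_{S\subseteq[k]} |\widehat{f}(S)|\, q^{|S|}$. Each of (\ref{eq:eq4.2a})--(\ref{eq:eq4.2c}) then follows by a short manipulation of this sum; we may assume $q\le 1$, since otherwise the right-hand sides are already at least a constant and the bounds are trivial. The main obstacle is confined to (\ref{eq:eq4.2c}), where we must invoke a granularity property of Boolean Fourier coefficients.

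For (\ref{eq:eq4.2a}), I would split the exponent as $q^{|S|} = q^{r|S|}\,q^{(1-r)|S|}$ with $r\in[0,1]$ and apply Cauchy--Schwarz to the $S$-sum, restricted to $\operatorname{supp}(\widehat{f})$:
\begin{align*}
    \sum_{S\in\operatorname{supp}(\widehat{f})} \big|\widehat{f}(S)\big|\, q^{|S|} \;\leq\; \sqrt{\sum_{S\subseteq[k]} \widehat{f}(S)^2\, q^{2r|S|}}\; \sqrt{\sum_{S\in\operatorname{supp}(\widehat{f})} q^{2(1-r)|S|}}.
\end{align*}
Theorem~\ref{thr:thr2.0} identifies the first factor as $\sqrt{\operatorname{Stab}_{q^{2r}}[f]}$, giving (\ref{eq:eq4.2a}) after absorbing constants into $C_\eta$.

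For (\ref{eq:eq4.2b}), every $S\in\operatorname{supp}(\widehat{f})$ has $|S|\ge\mathcal{h}$, so under $q\le 1$ we may pull out the worst-case weight $q^{|S|}\le q^{\mathcal{h}}$. What remains is $\sum_{S\subseteq[k]} |\widehat{f}(S)| = \hat{\|}f\hat{\|}_1$, yielding the bound immediately.

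For (\ref{eq:eq4.2c}), the key input is the standard granularity fact that any $f:\{-1,1\}^k\to\{-1,1\}$ of degree $d$ has all nonzero Fourier coefficients satisfying $|\widehat{f}(S)|\ge 2^{1-d}$. Given this, $|\widehat{f}(S)|\le 2^{d-1}\widehat{f}(S)^2$ on the support, and summing against $q^{|S|}$ yields
\begin{align*}
    \sum_{S\subseteq[k]} \big|\widehat{f}(S)\big|\, q^{|S|} \;\leq\; 2^{d-1}\sum_{S\subseteq[k]} \widehat{f}(S)^2\, q^{|S|} \;=\; 2^{d-1}\operatorname{Stab}_q[f],
\end{align*}
again via Theorem~\ref{thr:thr2.0}. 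This is the only step that is not purely mechanical: the granularity bound itself requires a separate (well-known) combinatorial argument based on restricting $f$ to $d$-dimensional subcubes and exploiting the parity of sums of $\pm 1$ values. Everything else reduces to Cauchy--Schwarz and the Fourier expression for $\operatorname{Stab}_q[f]$.
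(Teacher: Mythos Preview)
Your proposal is correct and essentially identical to the paper's proof: both start from Theorem~\ref{thr:thr4.4} rewritten as $\varepsilon \le C_\eta\sum_{S}|\widehat{f}(S)|q^{|S|}$, obtain (\ref{eq:eq4.2a}) by Cauchy--Schwarz (the paper phrases this via the noise operator and an auxiliary function $g=\sum_{S\in\operatorname{supp}(\widehat{f})}\operatorname{sgn}(\widehat{f}(S))\chi_S$, but it is the same inequality), obtain (\ref{eq:eq4.2b}) by pulling out $q^{\mathcal{h}}$, and obtain (\ref{eq:eq4.2c}) from the $2^{1-\deg(f)}$-granularity of the Fourier spectrum. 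No substantive differences.
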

\begin{proof}
    From Theorem~\ref{thr:thr4.4} we know that for any $\eta > 2\ln{2}$ there is a constant $C_\eta$ such that
    \begin{align}
        \label{eq:eq4.5}
        \varepsilon \leq C_\eta\sum_{\ell=\mathcal{0}}^k L_{1,\ell}(f)\left(\frac{\eta m}{n}\right)^{\ell/2}.
    \end{align}
    
    There are a couple of ways to bound the above quantity. We start by proving Eq.~(\ref{eq:eq4.2a}). Define $g:\{-1,1\}^k\to\mathbb{R}$, $g=\sum_{S\in\operatorname{supp}(\widehat{f})} \operatorname{sgn}(\widehat{f}(S)) \chi_S$. Let $\operatorname{T}_q$ be the noise operator with parameter $q=\sqrt{\frac{\eta m}{n}}$. Let $r,s\in[0,1]$ be such that $r+s=1$. By Cauchy-Schwarz,
	\begin{align*}
	    \left(\sum_{S\subseteq[k]} q^{|S|} |\widehat{f}(S)|\right)^2 &= |\langle \operatorname{T}_{q^r}f, \operatorname{T}_{q^s} g\rangle|^2 
	    \leq \langle \operatorname{T}_{q^r}f, \operatorname{T}_{q^r}f\rangle \langle \operatorname{T}_{q^s}h, \operatorname{T}_{q^s}g\rangle
	    = \operatorname{Stab}_{q^{2r}}[f]\sum_{S\in\operatorname{supp}(\widehat{f})}q^{2s|S|}.
	\end{align*} 
	By plugging the above equation into Eq.~(\ref{eq:eq4.5}), Eq.~(\ref{eq:eq4.2a}) follows.
	
    Eqs.~(\ref{eq:eq4.2b}) and~(\ref{eq:eq4.2c}) follow, respectively, by
	\begin{align*}
		\varepsilon \leq C_\eta\sum_{\ell=\mathcal{0}}^k L_{1,\ell}(f)\left(\frac{\eta m}{n}\right)^{\ell/2} \leq C_\eta\hat{\|}f\hat{\|}_1\left(\frac{\eta m}{n}\right)^{\mathcal{h}/2}
	\end{align*}
	and
	\begin{align}
	    \label{eq:eq4.7}
        \varepsilon \leq C_\eta\sum_{\ell=\mathcal{0}}^k L_{1,\ell}(f)\left(\frac{\eta m}{n}\right)^{\ell/2}
        \leq C_\eta\sum_{\ell=\mathcal{0}}^k 2^{\operatorname{deg}(f)-1}W^\ell[f]\left(\frac{\eta m}{n}\right)^{\ell/2}
        = C_\eta 2^{\operatorname{deg}(f)-1}\operatorname{Stab}_q[f],
	\end{align}
	where we used that $f$'s Fourier spectrum is $2^{1-\operatorname{deg}(f)}$-granular in Eq.~(\ref{eq:eq4.7}), i.e., $\widehat{f}(S)$ is an integer multiple of $2^{1-\operatorname{deg}(f)}$ for all $S\subseteq[k]$ (see \cite[Exercise~1.11]{o2014analysis}).
\end{proof}

Corollary~\ref{cor:cor1} helps with the comparison between the bias upper bound and the bias lower bounds for $f$-RAC and $f$-QRAC (Theorems~\ref{thr:thr4.0},~\ref{thr:thr4.1} and~\ref{thr:thr4.2}). By taking $\deg(f)$ as constant, we see that Eq.~(\ref{eq:eq4.2c}) matches the bias lower bounds in terms of the noise stability up to an overall multiplicative constant and up to the multiplicative constant $\sqrt{\eta}$ in the parameter $q$ in $\operatorname{Stab}_q[f]$. Another comparison is between Eq.~(\ref{eq:eq4.2b}) and Corollary~\ref{cor:cor2} in terms of the pure high degree of $f$. Again both bounds match up to a global multiplicative constant and up to the constant $\eta$. We conjecture that the constant $\eta$ can be dropped from all these bounds with a more careful analysis.


\section{Conclusions}

In this paper we proposed a simple generalization of the concept of random access to recovering the value of a given Boolean function on any subset of fixed size of the initial bits. This generalization was made assuming different resources as encoding maps, i.e., encoding the initial string into bits or qubits, and different auxiliary resources, e.g.\ private and shared randomness, shared entanglement and Popescu-Rohrlich boxes. Given the lower bounds from our protocols, it seems reasonable to assume that the bias $\operatorname{Stab}_q[f]$ with $q=\sqrt{\frac{m}{n}}$ is, if not optimal, at least close to optimal. The case with the weakest resources, the $n\mapsto m$ $f$-RAC with PR, already achieves such bias up to an additive term $O((\log{n/\ell})/(n/\ell))$ in the parameter $q$. For more general values of $m=O(\log{n})$, the use of quantum resources progressively improves $q$: from $q\approx \sqrt{\frac{2m}{\pi n}}$ using encoding bits and SR to $q\geq \sqrt{\frac{8m}{3\pi n}}$ using encoding qubits and SR and finally to $q=\sqrt{\frac{m}{n}}$ using encoding bits and shared entanglement. Such an improvement offered by quantum resources is relatively modest, specially when compared to stronger-than-quantum resources like Popescu-Rohrlich boxes, which allows the recovery of $f(x_S)$ with certainty for any $S$.

On the other hand, the techniques from Fourier analysis lead to bias upper bounds that match our bias lower bounds up to a global multiplicative constant and a factor $\sqrt{\eta} \approx \sqrt{2\ln{2}}$ in the parameter $q$. We conjecture that such upper bounds can be improved and the factor $\eta$ dropped. Moreover, the upper bounds apply only to $f$-QRACs with SR, therefore not including $f$-EARACs. The understanding of EARACs is still limited, and even though we obtained an upper bound by making an independence assumption, a general upper bound for the case $m>1$ is yet unknown. 

\subsection*{Acknowledgements}

We thank Ronald de Wolf for suggesting the block-encoding scheme, originally with the $2,3\mapsto 1$ QRACs, for pointing out Refs.~\cite{MR2115303,klauck2007one} and for feedback on the manuscript. We thank M\'{a}t\'{e} Farkas and Mark Howard for pointing out Refs.~\cite{aguilar2018certifying,farkas2020self,farkas2019self,tavakoli2015quantum} and~\cite{emeriau2020quantum}, respectively. We acknowledge support from the QuantERA ERA-NET Cofund in Quantum Technologies implemented within the European Union’s Horizon 2020 Programme (QuantAlgo project) and EPSRC grants EP/R043957/1 and EP/T001062/1. This project has received funding from the European Research Council (ERC) under the European Union’s Horizon 2020 research and innovation programme (grant agreement No.\ 817581). JFD was supported by the Bristol Quantum Engineering Centre for Doctoral Training, EPSRC Grant No.\ EP/L015730/1.

\bibliographystyle{plainnat}
\bibliography{doriguello}

\end{document}